\documentclass[11pt, draft]{article}

\usepackage{amsmath, amsfonts,amssymb, amsthm, euscript,makeidx,color,mathrsfs,latexsym,xcolor}

\newtheorem{assumption}{Assumption}

\newcommand{\ba}{\begin{array}}
\newcommand{\ea}{\end{array}}
\newcommand{\be}{\begin{equation}}
\newcommand{\ee}{\end{equation}}
\newcommand{\bee}{\begin{equation*}}
\newcommand{\eee}{\end{equation*}}
\newcommand{\bea}{\begin{eqnarray}}
\newcommand{\eea}{\end{eqnarray}}
\newcommand{\beaa}{\begin{eqnarray*}}
\newcommand{\eeaa}{\end{eqnarray*}}

\def\dbE{\mathbb{E}}
\def\dbF{\mathbb{F}}

\def\dbL{\mathbb{L}}

\def\dbP{\mathbb{P}}
\def\dbR{\mathbb{R}}

\def\bx{{\bf x}}

\def\a{\alpha}
\def\b{\beta}
\def\g{\gamma}

\def\e{\varepsilon}

\def\k{\kappa}
\def\l{\lambda}

\def\si{\sigma}

\def\th{\theta}

\def\D{\Delta}

\def\O{\Omega}
\def\cA{{\cal A}}
\def\cB{{\cal B}}

\def\cD{{\cal D}}

\def\cF{{\cal F}}

\def\cY{{\cal Y}}
\def\cZ{{\cal Z}}
\def\no{\noindent}

\def\ms{\medskip}
\def\bs{\bigskip}
\def\q{\quad}
\def\qq{\qquad}

\def\pa{\partial}
\def\cd{\cdot}
\def\cds{\cdots}

\def\td{\nabla}

\def\bx{{\bf x}}

\def\ol{\overline}
\def\ul{\underline}

\newcommand{\basa}{\begin{assumption}}
\newcommand{\easa}{\end{assumption}}

\newcommand{\bas}{\begin{assum}}
\newcommand{\eas}{\end{assum}}

\def\pa{\partial}

 \def\cd{\cdot}
\def\cds{\cdots}

\def\dis{\displaystyle}

\def\bx{{\bf x}}

\def\1{{\bf 1}}

\def\:{\!:\!}
\def\reff#1{{\rm(\ref{#1})}}
\def \proof{{\noindent \bf Proof\quad}}

at 9pt

\begin{document}

\newtheorem{thm}{Theorem}[section]
\newtheorem{lem}[thm]{Lemma}
\newtheorem{cor}[thm]{Corollary}
\newtheorem{prop}[thm]{Proposition}
\newtheorem{rem}[thm]{Remark}
\newtheorem{eg}[thm]{Example}
\newtheorem{defn}[thm]{Definition}
\newtheorem{assum}[thm]{Assumption}

\renewcommand {\theequation}{\arabic{section}.\arabic{equation}}
\def\thesection{\arabic{section}}

\title{\bf  A General Model for Continuous Time Principal-Agent Problem Under Hidden Action}

\author{
Jaeyoung Sung\thanks{ \no Ajou University, School of Business Administration, Suwon, 16499, South Korea.  Email: jaeyoungsung1@yahoo.com.}, ~ Jianfeng Zhang\thanks{\noindent
Department of Mathematics, University of Southern California, Los
Angeles, CA 90089. E-mail: jianfenz@usc.edu. This author is supported in part by NSF grant  DMS-2510403.
} ~ and ~{Zimu Zhu}\thanks{\noindent
Fintech Thrust, Hong Kong University of
Science and Technology (Guangzhou), Guangzhou, Guangdong Province, 511453, China. Email: zimuzhu@hkust-gz.edu.cn.
}}

\date{}
\maketitle

\begin{abstract}
In this paper, we study a general continuous-time Principal–Agent (PA) problem, where the agent privately makes effort and consumption decisions over time under a contract with payment schemes both in continuous time and in lump sums.  In particular, we allow the continuous payment process to be a controlled diffusion, which is directly related to the pay-to-performance sensitivity (PPS) in the empirical literature. In solving the agent’s problem, we propose a new sufficient condition that directly yields a solution to the agent’s problem, without requiring a separate verification step for the solution obtained from the first-order approach.  We also present an example which can be solved explicitly.
\end{abstract}

\vfill \bs

\no

{\bf Keywords}.  principal-agent problems, hidden action, pay-to-performance sensitivity,  second best optimal contracts, forward-backward SDEs. \rm

\bs

\no{\it MSC 2020. 91B41, 91B43, 93E20.}  	

\eject

\section{Introduction}
In this paper, we study a general form of continuous time principal-agent problems with moral hazard. For exposition of general discrete time principal-agent problems, we refer to the books Bolton-Dewatripont \cite{BD-book}, Laffont-Martimort \cite{LM-book}, and Salanie \cite{Salanie-book}. In the seminal paper \cite{holmstrom1987aggregation}, Holmstrom and Milgrom (HM) introduce a continuous-time contracting problem between the agent and principal with their preferences characterized by exponential utility functions of aggregate effort and wealth, 
and strikingly show that their model yields the optimal contract in a closed form.  
In particular, the form turns out to be linear in the noisy terminal outcome of the agent's continuous effort.  
Sch\"{a}ttler and Sung \cite{schattler1993first} and Sung \cite{Sung} generalize the HM original model with general outcome processes by using the martingale method and dynamic programming in order to obtain the general structure of optimal contracts and their implementability conditions.  
For more general models, 
Cvitanic, Wan and Zhang \cite{cvitanic2009optimal} characterize the agent's optimal action and the principal's optimal contract via backward stochastic differential equations (BSDEs) and  forward-backward SDEs (FBSDEs), respectively. Cvitanić,  Possamaï and Touzi \cite{CPT2017, CPT2018} consider further models where the agent controls the volatility, 
and characterize the optimal controls/values via  second order BSDEs and HJB equations, respectively. 

The above works are mainly on lump sum payments. Many researchers have shown interests in modelling continuous-time contracting problems by allowing the agent to be continuously paid for his continuous effort.  Sannikov \cite{sannikov2008continuous} considers a model where the agent with felicity functions which are additively separable in effort and payment/consumtion is also concerned with his retirement utility which is triggered by the stochastic outcome of his effort.  The author derives a partial differential equation (PDE) for the optimal contract/the principal's value function, by utilizing the agent's remaining utility as a state variable.  Williams \cite{williams2015solvable} studies another continuous payment model where the agent is endowed with multiplicatively separable felicity functions, and the author successfully finds the optimal contract in a closed form.  Note that both authors allow their agents to be paid in Lebesgue (time-)integrals, but not in stochastic integrals.  As a consequence, their cumulative payments evolve with  drift but not with  diffusion.  We refer to the authors' books Cvitanic and Zhang \cite{CZbook} and Sung \cite{Sung-book} for general introduction of the subject in continuous time models.

In this paper, we present a continuous-time contracting model where the agent is paid not only continuously  during the contracting period, but also in lump sums at initial and terminal dates.  We believe that our model can help 
unify the two strands of continuous-time contracting literature on lump-sum and continuous payment models.
Moreover, those different pay schemes covered under the optimal contract are relevant to real-world applications.
The continuous payment can be viewed as a flow representation of salaries and bonuses during employment; and    
the initial and terminal lump sums as a signing bonus at hiring, and a golden parachute/severance pay at termination, respectively.  In particular, the initial payment of signing bonus seems new in the literature, to the best of our knowledge.

In sharp contrast with the existing continuous-payment models including \cite{sannikov2008continuous}  and \cite{williams2015solvable},  our model allows the continuous payment process to be fully stochastic with both drift and diffusion.  Our motivation for the diffusion-driven continuous payment process is not surprising, in light of the discrete-time contracting literature (such as Cheung \cite{cheung1969risksharing} and Stiglitz \cite{stiglitz1974incentives}), which suggests that the full stochasticity feature is necessary to improve both incentives and risk-sharing.  Hence we hope that our continuous-time model will help fill in the well-known gap in the continuous-payment literature.  In addition, we expect the same feature would enable us to relate our results directly to the pay-to-performance sensitivity (PPS) in the empirical literature.

In order to review briefly the motivation in the context of continuous-time contracting problems with continuous payments, let us recall that at each point in time during the whole contracting period, the agent's effort decision would be based on his concerns about the effort outcome which we assume is driven by an It\^{o} process with nontrivial drift and diffusion rates.   Then, one can easily imagine that the agent can be incentivized to work when his reward is properly tied to the stochastic effort outcome.  If it is notwithstanding restricted to be a drift process only, the payment/reward process may not be fully in tandem with the stochastic outcome/the It\^{o} process, failing to give the agent desired incentives.   Hence it is in general crucial to allow the payment process to be fully stochastic with both drift and diffusion, in providing optimal incentives for the agent to work for the principal.   The importance of the full stochasticity feature is even more obvious, in that it is also necessary for the efficient division of the fully stochastic outcome between the two individuals, particularly when they both are risk averse.


In our general model, the agent continuously and privately makes costly effort, and consumption decisions. 
As typically is the case in the literature, the agent's continuous consumption flow is supported by his savings account, indirectly affecting his terminal utility.  Constrained by the agent's private decisions, the principal chooses a contract affecting the agent's savings account balance.
The contract stipulates schemes in continuous and lump sum payments, which are allowed to be fully stochastic, in that their dynamics are typically characterized by It\^{o} processes with not only drifts but nontrivial diffusions.  

In the analysis of the agent's problem, the optimization over his efforts can be solved by the standard approach by using weak formulation and the comparison principle of BSDEs. However, his additional control on consumption requires optimization over controlled decoupled FBSDEs, which leads further to a coupled FBSDE as the first order condition. The well-posedness of the latter equation is in general very challenging. Indeed, in our general setting, neither the existence nor the uniqueness of the agent's optimal consumption is guaranteed.  By imposing some concavity conditions, for any given contract, we show that the agent's value function is concave in his consumption and thus solve the agent's problem completely.  In particular, this implies that the resulting FBSDE is also well-posed. Despite strong technical conditions, the last result seems new in the FBSDE literature.
  
Given the solution to the agent's problem and fixing any initial payment, by using the dynamic programming principle approach, we may characterize the principal's optimal contract via an HJB equation. The optimization over the initial payment is a finite dimensional optimization and can be solved easily. We shall note though, this HJB equation involves unbounded controls, and thus involves the face-lifting issue, namely the solution can be discontinuous at the terminal time. For the face-lifting, we refer to  Broadie, Cvitanic, and Soner \cite{BCS} and Soner and Touzi \cite{ST}, and also to Zhang and Zhu  \cite{ZZ} in the contexts of principal-agent problems.


These HJB equations are in general difficult to solve.  To obtain an explicit solution to the system, we consider a case with initial and continuous payments only, but without lump sum terminal payments. Both the principal and the agent are risk neutral in terms of the terminal wealth, but are risk averse in continuous payments. In such a model, we find that regardless of the risk-averse felicity functions, the PPS (the pay-to-performance sensitivity) of the optimal contract turns out to be equal to one. Moreover, if we restrict to absolutely continuous payment process as in the standard literature, we show that the principal's optimal value remains the same but the optimal contract does not exist. So the diffusion component of the continuous payment process also helps for the existence of optimal contracts. 



In summary, our main contributions are threefold. First, we provide a general model to the PA problem with both continuous payments and  initial and terminal lump-sum payments. One related work is  Williams \cite{williams2015solvable},  which simplify the model by applying the results of Cole and Kocherlakota \cite{cole2001efficient} when there is no PPS.  Second, in solving the agent’s problem, we propose a new sufficient condition that directly yields a solution, without requiring a separate verification step for the solution obtained through the first-order approach. This condition is new to the literature.  Moreover, as a by-product, we solve a class of FBSDEs that has not been covered in the existing literature. Third, in Section \S\ref{sect-example} we construct an example with continuous payments explicitly and obtain its explicit solution. In particular, we find the PPS helps for the existence of optimal contracts.




The rest of the paper is organized as follows. In \S\ref{sect-the model} we present the general model for continuous time principal-agent problem.  In \S\ref{sect-agent} we analyze the agent's optimal problem and characterize his optimal efforts and consumptions through a coupled system of FBSDEs. In \S\ref{sect-sufficient} we provide some sufficient conditions so as to solve the agent's problem rigorously. In \S\ref{sect-principal} we analyze the principal's problem and characterize its value through HJB equations. Finally in \S\ref{sect-example} we construct an example which admits an explicit solution.

%
%

\section{The General Model}
\label{sect-the model}
\setcounter{equation}{0}
In this section we introduce a general model of the principal agent problem on a finite time horizon $[0,T]$. Fix a complete probability space $(\O, \cF, \dbP)$,  a standard Brownian motion $B$, and we set the filtration $\dbF := \dbF^B$.  

The principal (she) hires an agent (he) to work for her. The outcome of the agent's efforts is the state process $X$. As usual we shall use the weak formulation, where the agent controls the distribution rather than the paths of $X$, so we  fix the process $X$ as follows:
\bea
\label{eq:X}
X_t = x_0 + \int_0^t \sigma(s,X_s) dB_s.
\eea
We assume $\si$ is Lipschitz continuous in $x$ and non-degenerate, thus the above SDE is well-posed and $\dbF^X=\dbF^B=\dbF$. 
The agent's control consists of two parts: an action process $a$ taking values in certain Euclidean space and a real valued consumption process $c^A$, both of which are $\dbF^X$-progressively measurable. Given an action process $a$, we introduce the probability measure $\dbP^a$ through the Girsanov theorem: for certain function $b$, 
\bea
\label{Girsanov}
\left.\ba{lll}
\dis B^a_t: = B_t - \int_0^t b^a_s ds,\q\mbox{where}\q b^a_s := \sigma^{-1}(s,X_s)b(s,X_s, a_s),\\
\dis d\dbP^a := M^a_T d\dbP\q\mbox{where}\q  M^a_t := \exp\Big(\int_0^t b^a_s dB_s - {1\over 2} \int_0^t |b^a_s|^2 ds\Big).
\ea\right.
\eea
Under appropriate technical conditions, which we will specify later, $B^a$ is an $\dbP^a$-Brownian motion, and we note that
\bea
\label{XPa}
X_t =  x_0 + \int_0^t b(s, X_s, a_s) ds + \int_0^t \si(s, X_s) dB^a_s,\q \dbP^a-\mbox{a.s.}
\eea
That is, $(\O, \cF, \dbP^a, B^a, X)$ is a weak solution to the above SDE. Throughout the paper,  the following notation shall be used:
\beaa
\dbE := \dbE^\dbP,\q \dbE^a := \dbE^{\dbP^a},\q  \dbE_t:= \dbE [\cd |\cF_t],\q \dbE_t^a:= \dbE^{a}[\cd |\cF_t].
\eeaa

 The principal's contract $\Xi$ consists of a constant $S_0\in \dbR$, a pair of $\dbR\times \dbR^d$-valued $\dbF^X$-progressively measurable processes $(\a, \b)$,  and an $\cF^X_T$-measurable $\dbR$-valued random variable $\xi$. Here $\xi$ stands for the lump sum payment, $S_0$ is the signing bonus, and $(\a, \b)$ compose the continuous payment $S$ in the form:
\bea
\label{S}
dS_t=  \alpha_t dt+ \beta_t dX_t.
\eea
We note that, by the martingale representation, one can express $\xi$ in terms of $\a, \b$. However, since the continuous payment and the lump sum payment will affect the agent's and the principal's utilities differently, our model is strictly more general than models considering only continuous payments.   Note also that the agent instantaneously receives for his current effort not only a fixed payment, $\alpha_t dt$, but a (short-term) performance-based payment $\beta_t dX_t$.  This feature is somewhat distinct from typical continuous payment models in the literature.   
We expect that our explicit modeling of the performance-based continuous payment, $\beta_t dX_t$, will help relate our results directly to the empirical literature.  
Moreover, one may conveniently think of the performance-based compensation embedded in $\xi$ as the agent's long-term incentives, even though it is unnecessary in our model to distinguish between the agent's long- and short-term performance incentives, because the agent's effort during the whole contracting period gives no impact on the principal's wealth after $T$.
The principal's control also includes her consumption rate $c^P$ (or dividend rate), which is a real valued and $\dbF^X$-progressively measurable process.  

Given the agent's control $(a, c^A)$ and the principal's control $(\Xi, c^P)=(S_0, \a, \b, \xi, c^P)$, their wealth processes evolve as follows: 
\bea
\label{wealth}
\left.\ba{c} 
\dis dm^A_t= (r_Am^A_t-c^A_t) dt+dS_t = [r_Am^A_t-c^A_t+\alpha_t] dt+\beta_t \sigma(t,X_t) dB_t,\\
\dis dm^P_t = (r_Pm^P_t-c^P_t)dt+dX_t-dS_t= \big[r_Pm^P_t-c^P_t-\alpha_t\big] dt+(1-\beta_t)\sigma (t,X_t)dB_t,\\
\dis m^A_0 = m^0_A+S_0,\q m^P_0 = m^0_P -S_0,
\ea\right.
\eea
where $r_A, r_P\ge 0$, $\si, m^0_A, m^0_P$ are given parameters, and their expected utilities are:
\bea
\label{value}
\left.\ba{lll}
\dis J_A (\Xi,a,c^A) :=\dbE^a\Big[\int_0^T f_A(s,X_s, \alpha_s,\beta_s,a_s,c^A_s)ds+g_A(T, X_T, \xi, m^A_T) \Big],\\
\dis J_P(\Xi, a, c^P) :=\dbE^a \Big[\int_0^T f_P(s,X_s, \alpha_s,\beta_s,  c^P_s)ds+g_P(T, X_T, \xi,m^P_T)\Big],
\ea\right.
\eea
where $(f_A, f_P)$ and $(g_A, g_P)$ are their  felicity and terminal utility functions, respectively. By standard BSDE theory, it is clear that
\bea
\label{J=Y}
 J_A (\Xi,a,c^A) = Y^{A,a}_0,\q J_P(\Xi, a, c^P) = Y^{P,a}_0,
 \eea
where, for $\k=A, P$,  $(Y^{\k,a}, Z^{\k, a})= (Y^{\k,\Xi ,a,c^\k}, Z^{\k, \Xi ,a,c^\k})$  solve the following BSDEs:
\bea
\label{BSDEa}
\left.\ba{lll}
\dis Y^{A,a}_t = g_A(T, X_T, \xi, m^A_T) + \int_t^T \big[f_A(s,X_s, \alpha_s,\beta_s,a_s,c^A_s) + Z^{A,a}_sb^a_s \big] ds - \int_t^T Z^{A,a}_s dB_s,\\
\dis Y^{P,a}_t  = g_P(T, X_T, \xi, m^P_T) +  \int_t^T \big[f_P(s,X_s, \alpha_s,\beta_s,  c^P_s) +  Z^{P,a}_sb^a_s \big] ds - \int_t^T Z^{P,a}_s dB_s.
\ea\right.
\eea
Here for notational simplicity, $b^a, B$ are column vectors, while $Z^{A,a}, Z^{P,a}$ are row vectors.

We now specify the technical conditions, so that all the above analysis are valid. 

\begin{assum}
\label{assum-0}
(i) All the coefficients $b, \si, f_A, g_A, f_P, g_P$ are progressively measurable in all variables. 


\no(ii) $\si$ is uniformly Lipschitz continuous in $x$ and is non-degenerate.

\no (iii) For any bounded $a$ and any $p\ge 1$, $\dbE[|M^a_T|^p]<\infty$.

\no(iv) For any bounded $\a, \b, a, c^A, c^P, \xi$, and any $S_0$, there exists $\e>0$ such that
\beaa
\dis\dbE\left[ 
\begin{aligned}
& \int_0^T \left[ |f_A(s,X_s, \alpha_s,\beta_s,a_s,c^A_s)|^{2+\e}+ |f_P(s, X_s, \alpha_s,\beta_s,  c^P_s)|^{2+\e}\right]ds \\
& \dis\qq + \left|g_A(T, X_T, \xi,m^A_T)\right|^{2+\e} + \left|\Phi^P(T, X_T, \xi, m^P_T)\right|^{2+\e} \end{aligned}\right]<\infty.
\eeaa
\end{assum}
To fulfill the technical requirements, we consider the admissible set of the agent's controls for effort and consumption $(a,c^A)$, $\cA_A = \cA_A^1 \times \cA_A^2$, and that of principal's controls for contracts $\Xi=(S_0,\alpha,\beta,\xi)$, and her own consumption $c^P$,  $\cA_P =  \cA_P^1 \times \cA_P^2$ where
\bea
\label{cA}
\left.\ba{l}
 \dis \cA_A^1  := \left\{ \, a \, \left| \, \begin{array}{l} \dbE[|M^a_T|^p] < \infty,~\mbox{for all}~ p\ge 1;~\mbox{and for any bounded $\a,\b, c^A$},\\
   \dbE\big[\int_0^T|f_A(s, X_s, \alpha_s,\beta_s,a_s,c^A_s)|^{2+\e}ds\big]<\infty, ~\mbox{for some}~ \e>0  \end{array}  \right. \right\},\\
 \dis \cA_A^2:= \{\mbox{bounded}~ c^A\},\q \cA^2_P :=\big\{\mbox{bounded}~ c^P\big\},\ms\\
 \dis \cA^1_P := \cA^{1,0}_P\times \cA^{1,1}_P\times \cA^{1,2}_P,\q \cA^{1,0}_P:= \dbR,~  \cA^{1,2}_P:= \big\{\mbox{bounded}~(\a,\b)\big\},\\ 
\dis \cA^{1,2}_P := \left\{ \, \xi \, \left| \, \begin{array}{l} \mbox{for any $S_0\in \cA^{1,0}_P$, $(\a,\b)\in \cA^{1,2}_P$, $c^A\in \cA^2_A$, $c^P\in \cA^2_P$, $\exists~\e>0$ }\\
\mbox{s.t.}~ \dbE\big[|g_A(T, X_T, \xi,m^A_T)|^{2+\e} + |\Phi^P(T,X_T, \xi, m^P_T)|^{2+\e}\big]<\infty \end{array} \right. \right\}.
\ea\right.
\eea
Note that 
 for any $(a, c^A)\in \cA_A$ and $(\Xi, c^P)\in \cA_P$, $\dbP^a$ is indeed a probability measure equivalent to $\dbP$, and BSDEs \reff{BSDEa} are wellposed. Moreover, by Assumption \ref{assum-0} and \reff{cA},
\beaa
\dbE^a \Big[\big|g_A(T, X_T, \xi,m^A_T)\big|^2\Big] = \dbE \Big[M^a_T \big|g_A(T, X_T, \xi,m^A_T)\big|^2\Big] <\infty,
\eeaa
and similarly,
\beaa
\dbE^a \Big[\int_0^T\!\!\! \big[\big|f_A(s,X_s, \alpha_s,\beta_s,a_s,c^A_s)\big|^2 + |f_P(s, X_s, \alpha_s,\beta_s,  c^P_s)|^2\big]ds +\big|g_P(T, X_T, \xi, m^P_T)\big|^2 \Big]  <\infty.
\eeaa

We are now ready to state the problems of the two contracting parities.

{\bf The Agent's problem.} Given a contract $\Xi\in\cA_P^1$,  the agent chooses $(a^*,c^{A*})\in \cA_A$ to maximize his utility,  which is the incentive compatibility (IC) condition:
\begin{align}
\label{VA}
J_A(\Xi,a^*,c^{A*})=\sup_{(a,c^A)\in \cA_A}  J_A(\Xi,a,c^A)=:V_A(\Xi), \qq (IC)
\end{align} 
We shall specify further conditions in the next section so that, for any $\Xi\in  \cA^1_P$, the agent's value $V_A(\Xi)<\infty$, and his optimal control exists (may not be unique).


{\bf The Principal's problem.} 
The principal chooses a contract $\Xi\in  \cA_P^1$ and a consumption rate $c^P\in\cA_P^2$ to maximize her utility:
\bea
\label{VP}
V_P=\sup_{(\Xi,c^P)\in \cA_P, (a^*, c^{A*})\in \cA_A} J_P(\Xi, a^*, c^P),
\eea
subject to the (IC) constraint and another the participant constraint (PC):
\bea
\label{PC condition}
V_A(\Xi)\geq R, \qq (PC) 
\eea
where $R\in \dbR$ is the reservation utility for the agent to participate in the game.  
Note that given an contract $\Xi$, the agent's optimal control  $(a^*, c^{A*})$ may not be unique.
Then, we assume, the agent chooses the one that is in the principal's best interest.

\section{The agent's problem}
\label{sect-agent}
\setcounter{equation}{0}
In this section we discuss general conditions which ensure the above stated agent's problem to be well-posed.  Then we will provide some verifiable sufficient conditions in the next section, and present an example in Section \ref{sect-example} which can be explicitly solved. 

Henceforth, we drop the sub/superscript $A$ whenever confusion is unlikely.  In particular, we write 
$$
c=c^A,\q m=m^A,\q (Y^a, Z^a)= (Y^{A,a}, Z^{A,a}).
$$ 
In this section we fix a contract $\Xi\in  \cA_P^1$, and  we assume zero interest rates:
\bea
\label{r=0}
r_A=r_P=0.
\eea  
This zero-interest assumption is without loss of generality, in that one can always reset the model in terms of $\tilde{m}^A_t$ and $\tilde{m}^P_t$, where  $\tilde{m}^A_t=e^{-r_At}m_t^A$ and $\tilde{m}^P_t=e^{-r_pt}m_t^P$.

We first optimize the agent's utility over $a$. In light of the first BSDE in \reff{BSDEa}, we introduce
\bea
\label{FA}
F_A(t, x, \a, \b, c, z) := \sup_a \big[f_A(t, x, \a, \b, a, c) + \si^{-1}(t,x) b(t,x,a)  z\big].
\eea

\begin{assum}
\label{assum-FA}
(i) $F_A<\infty$ is well-defined for any $(t,x,\a, \b, c, z)$, and for any $(\a, \b)\in \cA_P^{1,1}$ and $c\in \cA_A^2$, $\dbE\big[\int_0^T |F_A(t, X_t, \a_t, \b_t, c_t, 0)|^2dt\big]<\infty$;

\no (ii) For any  $(t,x,\a, \b, c, z)$, the Hamiltonian $F_A$ admits an optimizer $a^* = I_a(t,x,\a, \b, c, z)$;

\no(iii) $F_A$ is continuously differentiable in $z$; 

\no (iv) $\si^{-1} b$ is bounded, and thus $\pa_z F_A$ is bounded. 
\end{assum}
\no Note that under Assumption \ref{assum-FA}, we have (by the Envelope Thoorem)
\bea
\label{Ia*}
\si^{-1}(t,x) b\big(t,x,I_{a}(t,x,\a, \b, c, z)\big) = \pa_z F_A(t, x, \a, \b, c, z).
\eea
We next introduce the following BSDE, with $(\ol Y^{c}, \ol Z^{c}) = (\ol Y^{A,\Xi, c}, \ol Z^{A,\Xi, c})$, 
\bea
\label{BSDEc}
\dis \ol Y^{c}_t = g_A(T, X_T, \xi, m_T) + \int_t^T F_A\big(s,X_s, \alpha_s,\beta_s, c_s, \ol Z^{c}_s\big) ds - \int_t^T \ol Z^{c}_s dB_s.
\eea
In fact, this BSDE is resulted from that of the agent in (\ref{BSDEa}) when optimized with respect to $a$ given $(\Xi,c)$.   The following result is standard.
\begin{prop}
\label{prop-a*}
Let Assumptions \ref{assum-0} and \ref{assum-FA} hold. 

\no (i) For any $\Xi\in  \cA_P^1$ and $c\in \cA_A^2$,  BSDE \reff{BSDEc} is well-posed. 

\no (ii) $J_A (\Xi,a,c) \le \ol Y^{c}_0$ for all $a\in \cA_A^1$. Moreover, set $a^*_t := I_a(t,X_t, \alpha_t,\beta_t, c_t, \ol Z^{c}_t)$. If there exists $\e>0$ such that
\bea
\label{a*integrability}
\dbE\big[\int_0^T|f_A(s, X_s, \alpha_s,\beta_s,a^*_s,c_s)|^{2+\e}ds\big]<\infty,
\eea
then $a^*\in \cA_A^1$ is the agent's optimal action and $V_A (\Xi, c) = \ol Y^{c}_0$, where
\[
V_A(\Xi,c) := \sup_{a \in \cA_A^1} J_A(\Xi,a,c).
\]
\end{prop}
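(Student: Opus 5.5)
The plan is to follow the standard weak-formulation argument: well-posedness of \reff{BSDEc}, then the BSDE comparison theorem for the upper bound, then equality of the two BSDEs at the optimizer.

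\textbf{Step (i): well-posedness.} Fix $\Xi\in\cA_P^1$ and $c\in\cA_A^2$. The terminal value of \reff{BSDEc} is $g_A(T,X_T,\xi,m_T)$. Since $\a,\b,c$ are bounded and \reff{wealth} with $r_A=0$ is an explicit It\^o integral, $m_T$ is well defined. By the definition of $\cA_P^{1,2}$, the terminal value lies in $L^{2+\e}(\dbP)\subset L^2(\dbP)$. For the generator, write
\[
F_A(s,X_s,\a_s,\b_s,c_s,z)=F_A(s,X_s,\a_s,\b_s,c_s,0)+\Big[\int_0^1\pa_zF_A(s,X_s,\a_s,\b_s,c_s,\th z)\,d\th\Big]z .
\]
By Assumption \ref{assum-FA}(iii)--(iv), $F_A$ is uniformly Lipschitz in $z$, with constant $\|\si^{-1}b\|_\infty$. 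By Assumption \ref{assum-FA}(i), $F_A(\cdot,0)$ is in $L^2(dt\times d\dbP)$. The standard Lipschitz BSDE theory (Pardoux--Peng) therefore gives a unique solution $(\ol Y^c,\ol Z^c)$ in the usual $L^2$ spaces.

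\textbf{Step (ii): upper bound.} Fix $a\in\cA_A^1$ and use \reff{J=Y} to write $J_A(\Xi,a,c)=Y^a_0$, with $Y^a$ solving the first BSDE in \reff{BSDEa}. Both BSDEs have the same terminal value. By the definition \reff{FA} of $F_A$, the generators satisfy
\[
f_A(s,X_s,\a_s,\b_s,a_s,c_s)+ \si^{-1}b(s,X_s,a_s)\,z \le F_A(s,X_s,\a_s,\b_s,c_s,z)
\]
for every $z$, and in particular along $z=Y$'s own $Z^a$. Set $\D Y:=\ol Y^c-Y^a$ and $\D Z:=\ol Z^c-Z^a$. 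Then
\[
d\D Y_t=-\big[\g_t\D Z_t+\d_t\big]dt+\D Z_t\,dB_t ,
\]
where $\d_t\ge0$ is the gap between the two generators evaluated at $Z^a$ and $\g$ is the bounded difference quotient of $F_A$ in $z$. I will not invoke a comparison theorem as a black box here, because $f_A(\cdot,a)$ need not be Lipschitz-controlled uniformly. Instead I will linearize directly. Define $\Gamma$ as the stochastic exponential of $\int\g\,dB$; since $\g$ is bounded, $\Gamma$ has all moments. Then $\Gamma\D Y+\int_0^\cdot\Gamma\d\,ds$ is a local martingale. It is a true martingale because $\D Y,\D Z$ are square integrable and $\Gamma\in L^p$ for all $p$. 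Taking expectations gives $\D Y_0=\dbE[\int_0^T\Gamma_s\d_s ds]\ge0$, that is, $J_A(\Xi,a,c)\le\ol Y^c_0$.

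\textbf{Step (ii): optimality of $a^*$.} Let $a^*_t=I_a(t,X_t,\a_t,\b_t,c_t,\ol Z^c_t)$; it is progressively measurable by Assumption \ref{assum-FA}(ii). By \reff{Ia*}, $b^{a^*}=\pa_zF_A(\cdot,\ol Z^c)$ is bounded, so Novikov's criterion gives $\dbE[|M^{a^*}_T|^p]<\infty$ for all $p$. Combined with \reff{a*integrability}, this yields $a^*\in\cA_A^1$; boundedness of $\a,\b,c$ is all the definition of $\cA_A^1$ requires. By definition of the optimizer,
\[
f_A(\cdot,a^*_s,c_s)+\si^{-1}b(\cdot,a^*_s)\,\ol Z^c_s=F_A(\cdot,\ol Z^c_s).
\]
Hence $(\ol Y^c,\ol Z^c)$ solves the BSDE \reff{BSDEa} associated with $a^*$. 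By uniqueness, which holds because that BSDE is linear in $Z$ with bounded coefficient $b^{a^*}$ and has integrable data, $Y^{a^*}=\ol Y^c$. Therefore $J_A(\Xi,a^*,c)=\ol Y^c_0\ge\sup_aJ_A(\Xi,a,c)$, which gives $V_A(\Xi,c)=\ol Y^c_0$.

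\textbf{Main obstacle.} The delicate step is the integrability bookkeeping in the comparison for an arbitrary $a\in\cA_A^1$. Here $b^a$ need not be bounded, so $(Y^a,Z^a)$ is naturally controlled under $\dbP^a$ rather than $\dbP$, and one must make sure the local martingale above is a true martingale. I would handle this with H\"older's inequality, using $M^a_T\in L^p$ for all $p$ together with the $2+\e$ integrability of $f_A$ and $g_A$. This shows $Y^a$ and $Z^a$ lie in $L^q$ for some $q>1$, which is enough for the martingale property when $\Gamma$ has all moments.
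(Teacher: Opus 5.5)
Your proof is correct and follows the paper's route. You establish Lipschitz well-posedness of \reff{BSDEc}, then the comparison principle, which you prove by hand through linearization instead of citing it. Finally you observe that $(\ol Y^c,\ol Z^c)$ solves the BSDE for $a^*$, so $J_A(\Xi,a^*,c)=\ol Y^c_0$.

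One correction concerns your ``main obstacle'' paragraph. By Assumption \ref{assum-FA}(iv), $\si^{-1}b$ is bounded, so $b^a$ is bounded for \emph{every} $a$, not only for $a^*$. Hence the BSDE for $(Y^a,Z^a)$ is an ordinary Lipschitz BSDE under $\dbP$ with $L^2(\dbP)$ data, using the $2+\e$ integrability built into $\cA_A^1$ and $\cA_P^{1,2}$. Also, both $M^a_T$ and $(M^a_T)^{-1}$ have all moments. So the square integrability of $\D Y,\D Z$ that you invoke for the true-martingale property holds directly, and the obstacle never arises.

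Had $b^a$ really been unbounded, your proposed H\"older fix would not have worked. Moments of $M^a_T$ transfer $\dbP$-integrability to $\dbP^a$-integrability, not the reverse; the reverse would need moments of $(M^a_T)^{-1}$. The remedy in that case is to linearize around $\ol Z^c$ rather than $Z^a$. Write the generator gap as $\big[F_A(\cdot,\ol Z^c)-f_A(\cdot,a)-\ol Z^c b^a\big]+\D Z\,b^a$, so that $\Gamma=M^a$ and the comparison is carried out under $\dbP^a$.

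Separately, your remark that boundedness of $\a,\b,c$ is all that membership in $\cA_A^1$ requires misreads the definition. The definition asks for the $2+\e$ bound for every bounded $(\a,\b,c^A)$, while \reff{a*integrability} gives it only for the given triple. The paper makes the same leap from \reff{a*integrability} to $a^*\in\cA_A^1$, however, so this is not a divergence from its argument.
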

\proof Fix $\Xi\in \cA_P^1$ and $c\in \cA_A^2$. Since $F_A$ is uniformly Lipschitz continuous in $z$, and 
\beaa
\dbE\Big[|g_A(T, X_T, \xi, m_T)|^2 + \int_t^T |F_A\big(s,X_s, \alpha_s,\beta_s, c_s,0\big)|^2 ds\Big]<\infty,
\eeaa
it is clear that BSDE \reff{BSDEc} is well-posed. By the definition of $F_A$ and the comparison principle of BSDEs, we have $J_A (\Xi,a,c) \le \ol Y^{c}_0$ for all $a\in \cA_A^1$. Moreover, since $\si^{-1}b$ is bounded, we see that $\dbE[|M^{a^*}_T|^p]<\infty$ for all $p\ge 1$. Then \reff{a*integrability} implies that $a^*\in \cA_A^1$. It is obvious that $J_A (\Xi,a^*,c) = Y^{a^*}_0= \ol Y^{c}_0$. This implies that $V_A (\Xi, c) = \ol Y^{c}_0$ and $a^*$ is indeed an optimal action of the agent.
\qed

We next optimize the agent's utility $V_A (\Xi, c) = \ol Y^{c}_0$ over $c$. For this we shall apply the stochastic maximum principle, which works only for open loop controls. Here our controls are closed-loop, namely depending on the state process $X$. However, since we are using the weak formulation and thus $\dbF^X=\dbF^B$, this allows us to use the stochastic maximum principle. Recall \reff{wealth} and \reff{BSDEc}, and note that $m=m^{c}$ depends on $c$.  Fix $c, \D c\in \cA_A^2$ and set $c^{\e}:= c+\e \D c$. Introduce the variational processes, depending on $(c, \D c)$:
\bea
\label{tdY}
\td m := \lim_{\e\to 0} {1\over \e} [m^{c^{\e}} - m^{c}], \q \td \ol Y := \lim_{\e\to 0} {1\over \e} [\ol Y^{c^{\e}} - \ol Y^{c}],\q \td \ol Z := \lim_{\e\to 0} {1\over \e} [\ol Z^{c^{\e}} - \ol Z^{c}].
\eea
Under appropriate conditions and recalling \reff{r=0}, by straightforward calculation we obtain
\bea
\label{tdolY}
\left.\ba{c}
\dis  \td m = -\int_0^t  \D c_s ds,\\
\dis  \td\ol Y_t = \pa_m g_A\td m_T + \int_t^T \big[\pa_c F_A \D c_s + \pa_zF_A \td  \ol Z_s\big] ds - \int_t^T \td\ol Z_s dB_s.
\ea\right.
 \eea
 Introduce the conjugate processes $(\cY^{c}, \cZ^{c})$, which solve the following BSDE:
 \bea
 \label{cYA}
 \cY^{c}_t  = \pa_m g_A(T, X_T, \xi, m_T) + \int_t^T  \pa_z F_A\big(s,X_s, \alpha_s,\beta_s, c_s, \ol Z^{c}_s\big)\cZ^{c}_s ds -  \int_t^T \cZ^{c}_sdB_s.
 \eea
We note that this BSDE relies on $c$, but not on $\D c$. Then, for the $a^*$ in Proposition \ref{prop-a*} (ii), by  \reff{Ia*} and applying the It\^{o} formula we have
\beaa
d(\td \ol Y_t - \cY^c_t \td m_t) = -[\pa_c F_A - \cY^c_t] dt + [\td \ol Z_t - \cZ^c_t \td m_t] dB^{a^*}_t.
\eeaa
Since $\td\ol Y_0- \cY^c_0\td m_0 = \td \ol Y_0$, and $\td\ol Y_T- \cY^c_T\td m_T =0$, we obtain
\bea
\label{dual}
\td\ol Y_0 = \dbE^{a^*}\Big[\int_0^T \big[ \pa_c F_A\big(s,X_s, \alpha_s,\beta_s, c_s, \ol Z^{c}_s\big) - \cY^{c}_s\big] \D c_sds\Big].
\eea

We are now ready to state the necessary condition for the optimal $c^*$.

\begin{assum}
\label{assum-FA2}
(i) $F_A$ and $g_A$ are continuously differentiable in $c$ and $m$, respectively, such that, for any $\Xi\in \cA_P^1$ and $c\in \cA_A^2$, 
\beaa
\dbE\Big[\big| \pa_m g_A(T, X_T, \xi, m_T)|^2 + \int_t^T \big|\pa_c F_A\big(s,X_s, \alpha_s,\beta_s, c_s, \ol Z^{c}_s\big)\big|^2 ds\Big]<\infty.
\eeaa

\no(ii) $F_A$ is strictly concave in $c$; and the function $c\to y=\pa_c F_A\big(t, x, \alpha,\beta, c, \ol z\big)$ has an inverse function, denoted as $I_c(t,x,\a, \b, \ol z, y)$.
\end{assum}
\no We note that the concavity at (ii) above may not be ensured by the concavity of $f_A$ in $c$.

\begin{prop}
\label{prop-c*}
Let Assumptions \ref{assum-0}, \ref{assum-FA}, \ref{assum-FA2} hold, and assume \reff{r=0}. Assume $c^*\in \cA_A^2$ is optimal for the agent: $V_A(\Xi, c^*) = V_A(\Xi)$. Then $c^*_t = I_c(t, X_t, \a_t, \b_t, \ol Z_t, \cY_t)$, where $(m, \ol Y, \ol Z, \cY, \cZ) = (m^{\Xi}, \ol Y^{\Xi}, \ol Z^{\Xi}, \cY^{\Xi}, \cZ^{\Xi})$ satisfies the following FBSDE:
\bea
\label{FBSDE}
\left.\ba{lll}
\dis m_t = m_0 + \int_0^t \big[-I_c(s, X_s, \a_s, \b_s, \ol Z_s, \cY_s)+\alpha_s\big] ds+ \int_0^t \beta_s \sigma(s,X_s) dB_s;\\
\dis \ol Y_t = g_A(T, X_T, \xi, m_T)  + \int_t^T F^{I_c}_A\big(s,X_s, \alpha_s,\beta_s, \ol Z_s, \cY_s\big) ds- \int_t^T \ol Z_s dB_s,\\
 \dis \cY_t  = \pa_m g_A(T, X_T, \xi, m_T) + \int_t^T  (\pa_z F_A)^{I_c}\big(s,X_s, \alpha_s,\beta_s, \ol Z_s, \cY_s\big)\cZ_s ds -  \int_t^T \cZ_sdB_s,\\
 \mbox{where, for $\phi = F_A, \pa_z F_A$},~ \phi^{I_c}(t,x,\a, \b, \ol z, y) := \phi\big(t,x,\a, \b, I_c(t, x, \a, \b, \ol z, y), \ol z\big). 
\ea\right.
\eea
\end{prop}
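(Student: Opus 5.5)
The plan is a standard first-order (stochastic maximum principle) argument built on the duality identity \reff{dual}. By Proposition \ref{prop-a*}, $V_A(\Xi,c)=\ol Y^c_0$ for every $c\in\cA_A^2$, so optimality of $c^*$ means that $c\mapsto \ol Y^c_0$ attains its maximum over $\cA_A^2$ at $c^*$. Fix an arbitrary bounded $\D c$. Then $c^\e:=c^*+\e\D c\in\cA_A^2$ for all $\e\in\dbR$, and $\e\mapsto \ol Y^{c^\e}_0$ has a maximum at $\e=0$. Provided this map is differentiable at $0$, its derivative $\td\ol Y_0$ must vanish. Using \reff{dual} with $c=c^*$, this gives
\beaa
\dbE^{a^*}\Big[\int_0^T \big[ \pa_c F_A\big(s,X_s, \alpha_s,\beta_s, c^*_s, \ol Z^{c^*}_s\big) - \cY^{c^*}_s\big] \D c_s\,ds\Big]=0
\eeaa
for every bounded progressively measurable $\D c$.

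From this identity I will conclude pointwise. Take $\D c_s:=\mathrm{sgn}(\Gamma_s)\1_{\{|\Gamma_s|\le N\}}$, where $\Gamma$ denotes the bracketed integrand. Then $\dbE^{a^*}[\int_0^T|\Gamma_s|\1_{\{|\Gamma_s|\le N\}}ds]=0$. Letting $N\to\infty$ and using that $\dbP^{a^*}\sim\dbP$ (since $\si^{-1}b$ is bounded), we get $\Gamma=0$, $dt\times d\dbP$-a.e. Thus $\pa_cF_A(t,X_t,\a_t,\b_t,c^*_t,\ol Z_t)=\cY_t$. By Assumption \ref{assum-FA2}(ii), $c\mapsto\pa_cF_A$ is invertible, so $c^*_t=I_c(t,X_t,\a_t,\b_t,\ol Z_t,\cY_t)$.

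Next I substitute this back into the three equations. The wealth equation \reff{wealth} with $r_A=0$ gives the forward equation in \reff{FBSDE}. BSDE \reff{BSDEc} with $c=c^*$ gives the $\ol Y$-equation, with $F_A(\cdot,c^*_s,\ol Z_s)=F_A^{I_c}(\cdot,\ol Z_s,\cY_s)$. The adjoint BSDE \reff{cYA} gives the $\cY$-equation, with $\pa_zF_A$ replaced by $(\pa_zF_A)^{I_c}$. Hence $(m,\ol Y,\ol Z,\cY,\cZ)$ solves the coupled FBSDE \reff{FBSDE}.

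The main obstacle is rigorously justifying the variational calculus, namely that the limits in \reff{tdY} exist, satisfy \reff{tdolY}, and that the It\^o/duality step leading to \reff{dual} holds. For $\td m$ this is trivial, since $m^{c^\e}-m^{c^*}=-\e\int_0^\cdot\D c_s\,ds$ exactly. For $\td\ol Y$ I would write the difference quotient of \reff{BSDEc} as a linear BSDE. Its driver coefficients are difference quotients of $F_A$ in $c$ and $z$, and its terminal value is the difference quotient of $g_A$ in $m$, which is bounded in $L^2$ via the mean value theorem and the integrability in Assumption \ref{assum-FA2}(i). The $z$-coefficient is bounded by Assumption \ref{assum-FA}(iv). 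Standard BSDE stability estimates together with dominated convergence, using continuity of $\pa_cF_A$, $\pa_mg_A$, $\pa_zF_A$, should then give $L^2$ convergence to the solution of \reff{tdolY}. For the duality step, integrability of $\cY\td m$ and of the stochastic integral under $\dbP^{a^*}$ follows from boundedness of $\td m$ and the moment bounds on $M^{a^*}_T$, so the local martingale $\int[\td\ol Z-\cZ\td m]dB^{a^*}$ is a true martingale. I expect that some uniform-integrability control of the difference quotients of $\pa_cF_A$ along $c^*+\theta\e\D c$ is needed; I would either derive it from Assumption \ref{assum-FA2}(i) applied to the bounded controls $c^\e$ or add it as an implicit condition, consistent with the paper's ``under appropriate conditions''.
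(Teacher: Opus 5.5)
Your proposal is correct and follows the same route as the paper: the first-order condition $\td\ol Y_0=0$ (the paper writes $\td\ol Y_0\le 0$ for all $\D c$, which is equivalent since $-\D c$ is also admissible), then the duality identity \reff{dual} to get $\pa_cF_A=\cY$ a.e.\ (your truncated sign test function spells out a step the paper only asserts), then inversion via $I_c$ and substitution into \reff{wealth}, \reff{BSDEc} and \reff{cYA}. Like the paper, you take $V_A(\Xi,c)=\ol Y^c_0$ for all bounded $c$, and the validity of the variational expansions under ``appropriate conditions'', as given, so your argument has no gap beyond those shared assumptions.
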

\proof Under Assumption \ref{assum-FA2}, clearly the FBSDEs \reff{tdolY} and \reff{cYA} are well-posed, and one can easily verify that \reff{tdY} and \reff{dual} hold true. Given optimal $c^*$, by \reff{tdY} we have $\td \ol Y_0=\td \ol Y^{c^*,\D c}_0 \le 0$ for all $\D c\in \cA_A^2$, then \reff{dual} implies $ \pa_c F_A\big(s,X_s, \alpha_s,\beta_s, c^*_s, \ol Z^{c^*}_s\big) = \cY^{c^*}_s$, $ds\times d\dbP^{a^*}$-a.s., which is equivalent to $ds\times d\dbP$-a.s. Thus we have $c^*_s = I_c(s, X_s, \a_s, \b_s, \ol Z_s, \cY_s)$. Plugging this into \reff{wealth},  \reff{tdolY}, and \reff{cYA}, we obtain the FBSDE \reff{FBSDE}.
\qed

\begin{rem}
\label{rem-agent}
(i) The system of FBSDEs in \reff{FBSDE} is a necessary condition, rather than a sufficient condition for the optimal $c^*$. By using Ekeland's variational principle, we can show that if FBSDE \reff{FBSDE} is well-posed for all $\Xi$, including the continuous dependence on $\Xi$, then $c^*_t := I_c(t, X_t, \a_t, \b_t, \ol Z_t, \cY_t)$ is indeed optimal for the agent's problem, see \cite[Theorem 10.3.10]{CZbook}.

\no (ii) The well-posedness of FBSDE \reff{FBSDE} is in general quite challenging. In fact, without additional structural conditions, one may expect neither the existence nor the uniqueness of the optimal control for the optimization problem $\sup_{c\in \cA_A^2} V_A(\Xi, c)$. 

\no (iii) In the next section, we shall provide some sufficient conditions, so that  the optimization problem $\sup_{c\in \cA_A^2} V_A(\Xi, c)$ has a unique optimal $c^*$, and consequently the corresponding FBSDE \reff{FBSDE} is well-posed.
\end{rem}
%
%

\section{A sufficient condition for the agent's problem}
\label{sect-sufficient}
\setcounter{equation}{0}
In this section we assume $f_A$ takes the following form
\bea
\label{FA-special}
\left.\ba{lll}
\dis f_A(t, x, \a, \b, a, c) = f_1(t,x,\a, \b) + f_2(t,x,a) -  \l_1(t) e^{-\l_2(t) c},\\
\dis \mbox{thus}\q F_A(t,x,\a, \b, c, z) = f_1(t,x,\a, \b) + F_2(t,x,z) -  \l_1(t) e^{\l_2(t) c},\\
\dis\mbox{where}\q F_2(t,x,z) := \sup_a \big[f_2(t,x,a)  + \si^{-1}(t,x) b(t,x,a)  z\big].
\ea\right.
\eea

\begin{assum}
\label{assum-FA3}
Assume \reff{FA-special} holds true.

\no(i) $F_2$ is twice continuously differentiable in $z$ with $0\le \pa_{zz} F_2 \le L$ for a constant $L$.

\no (ii) For $i=1,2$, $\ul \l_i \le \l_i(t) \le \ol \l_i$, for some constants $\ol \l_i \ge \ul \l_i > 0$. 

\no (iii) $g_A$ is twice continuously differentiable in $m$, such that $\ul \g_T \le \pa_m g_A \le \ol \g_T$ for some constants $\ol \g_T \ge \ul \g_T >0$, and $\pa_{mm} g_A \le 0$.

\no (iv) $4LT|\ol \gamma_T|^2 < \underline{\lambda}_2 \ul \gamma_T$.
\end{assum}

In this case 
\bea
\label{Ic2}
I_c(t, y) = {1\over \l_2(t)} \ln\Big({\l_1(t)\l_2(t)\over y}\Big),\q y>0,
\eea
and FBSDE \reff{FBSDE} becomes
\bea
\label{FBSDE2}
\left.\ba{lll}
\dis m_t = m_0 + \int_0^t \big[- {1\over \l_2(s)} \ln\Big({\l_1(s)\l_2(s)\over \cY_s}\Big)+\alpha_s\big] ds+ \int_0^t \beta_s \sigma(s,X_s) dB_s;\\
\dis \ol Y_t = g_A(T, X_T, \xi, m_T)  - \int_t^T \ol Z_s dB_s\\
\dis\qq \q + \int_t^T \big[f_1(s,X_s,\a_s, \b_s) + F_2(s,X_s, \ol Z_s) - {\cY_s\over \l_2(s)}\big] ds,\\
 \dis \cY_t  = \pa_m g_A(T, X_T, \xi, m_T) + \int_t^T  \pa_z F_2\big(s,X_s,  \ol Z_s\big)\cZ_s ds -  \int_t^T \cZ_sdB_s.
\ea\right.
\eea

\begin{prop}
\label{prop-bound}
Let Assumptions \ref{assum-0}, \ref{assum-FA}, \ref{assum-FA2}, \ref{assum-FA3} (i)-(iii), and \reff{r=0} hold true.  Introduce
\bea
\label{K}
\ul K_t:= {1\over \lambda_2(t)}\ln\Big({\lambda_1(t)\lambda_2(t)\over \ol \g_T}\Big), \q \ol K_t:= {1\over \lambda_2(t)}\ln\Big( {\lambda_1(t)\lambda_2(t) \over \ul\g_T}\Big).
\eea
Clearly $\ul K\le \ol K$. Fix $\Xi\in \cA_P^1$ and denote $J(c) := V_A(\Xi, c) = \ol Y^c_0$. Then 
\bea
\label{cbound}
J(c) \le \min\big( J(c\vee \ul K),~ J(c\wedge \ol K) \big),\q \forall c\in \cA_A^2.
\eea
Consequently,
\bea
\label{ctruncation}
V(\Xi) = \sup_{c\in \cA_A^2([\ul K, \ol K])} J(c),\q\mbox{where}\q \cA_A^2([\ul K, \ol K]):=\Big\{c\in \cA_A^2: \ul K \le c\le \ol K\Big\}.
\eea
\end{prop}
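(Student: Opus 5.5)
The plan is to compare $J(c)$ with $J(c')$, where $c' := c\vee \ul K$, by moving along the segment $c^\th := c' + \th (c-c')$, $\th\in[0,1]$. We use the first-order variational formula \reff{dual} at every intermediate point of this segment. The point is that the sign of the integrand in \reff{dual} can be controlled a priori. This is because the adjoint process $\cY^c$ is confined to $[\ul\g_T,\ol\g_T]$ whatever the consumption $c$. The case $c\wedge \ol K$ is symmetric, and \reff{ctruncation} then follows by applying the two truncations in turn. All admissible $c$ are bounded, so every $c^\th$ is bounded and lies in $\cA_A^2$.

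\emph{Step 1: a priori bound on the adjoint.} Fix any $\tilde c\in\cA_A^2$. The BSDE \reff{cYA} for $\cY^{\tilde c}$ is linear with coefficient $\pa_z F_2(s,X_s,\ol Z^{\tilde c}_s)$, which is bounded by Assumption \ref{assum-FA}(iv). After a Girsanov change of measure to $\dbP^{a^*}$ with $a^* = I_a(\cdot,\ol Z^{\tilde c})$, using \reff{Ia*}, we get $\cY^{\tilde c}_t = \dbE^{a^*}_t[\pa_m g_A(T,X_T,\xi,m^{\tilde c}_T)]$. Assumption \ref{assum-FA3}(iii) then gives $\ul\g_T\le \cY^{\tilde c}_t\le \ol\g_T$. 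In particular $\cY^{\tilde c}>0$, so $I_c$ in \reff{Ic2} is well defined.

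\emph{Step 2: sign of the directional derivative.} With the paper's convention that \reff{Ic2} inverts $y=\pa_c F_A$, we have $\pa_c F_A(t,c)=\l_1(t)\l_2(t)e^{-\l_2(t)c}$, which is decreasing in $c$. By \reff{K}, $\pa_cF_A(t,\ul K_t)=\ol\g_T$ and $\pa_cF_A(t,\ol K_t)=\ul\g_T$. Put $\D c:=c-c'\le 0$; it is supported on $\{c<\ul K\}$. On that set $c^\th\le \ul K$, so $\pa_cF_A(s,c^\th_s)\ge \ol\g_T\ge \cY^{c^\th}_s$ by Step 1. Hence
\[
\big[\pa_cF_A(s,c^\th_s)-\cY^{c^\th}_s\big]\D c_s\le 0 \quad\mbox{everywhere.}
\]
By \reff{dual} applied at $c^\th$ in direction $\D c$, the map $\phi(\th):=J(c^\th)$ therefore satisfies $\phi'(\th)=\td\ol Y^{c^\th,\D c}_0\le 0$ for every $\th\in[0,1]$. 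Note that $f_2$ does not depend on $c$, so the $a$-optimization and the $c$-derivative decouple.

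\emph{Step 3: integrate and conclude.} Integrating gives $J(c)-J(c')=\phi(1)-\phi(0)=\int_0^1\phi'(\th)d\th\le 0$. For the upper bound we take $c'=c\wedge\ol K$ and $\D c=c-c'\ge0$, supported on $\{c>\ol K\}$. There $c^\th\ge\ol K$, so $\pa_cF_A(s,c^\th_s)\le\ul\g_T\le\cY^{c^\th}_s$, and the integrand is again $\le 0$. For \reff{ctruncation} we write $J(c)\le J(c\vee\ul K)\le J\big((c\vee\ul K)\wedge\ol K\big)$. The last control lies in $\cA_A^2([\ul K,\ol K])$, which gives the $\le$ direction; the $\ge$ direction is trivial.

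The main obstacle is the rigorous justification of Step 2: we need $\phi$ to be differentiable at every $\th$, with derivative given by \reff{dual}, and $\phi'$ integrable so that the fundamental theorem of calculus applies. I would first establish the limits \reff{tdY} uniformly in $\th$ by standard $L^2$ BSDE stability estimates. These use the boundedness of $c$, $\D c$ and $\pa_zF_2$, the bounds $0\le\pa_{zz}F_2\le L$ and $\ul\g_T\le\pa_mg_A\le\ol\g_T$, and $\pa_{mm}g_A\le0$. The same estimates give continuity of $\phi'$. If that is delicate, it suffices to show $\phi$ is Lipschitz with right derivative $\le 0$ everywhere, which also yields monotonicity.
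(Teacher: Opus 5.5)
Your proof is correct, but it takes a different route from the paper's.

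The paper never differentiates along a path and does not use the adjoint process at this stage. It writes the BSDE for the finite difference $\Delta\overline Y=\overline Y^{c}-\overline Y^{c\wedge\overline K}$ and linearizes it once, with mean-value points $\tilde c_s\in[c_s\wedge\overline K_s,c_s]$, $\tilde m_T$ and $\tilde Z_s$. It then substitutes $\Delta m_T=-\int_0^T\Delta c_s\,ds$ directly. After a single Girsanov change with the bounded drift $\partial_zF_2(s,X_s,\tilde Z_s)$, this gives
\[
\Delta\overline Y_0=\dbE^{\tilde\dbP}\Big[\int_0^T\big(\lambda_1\lambda_2e^{-\lambda_2\tilde c_s}-\partial_mg_A(T,X_T,\xi,\tilde m_T)\big)\Delta c_s\,ds\Big].
\]
The sign follows from $\tilde c_s>\overline K_s$ on $\{\Delta c_s\neq0\}$ together with $\partial_mg_A\ge\underline\gamma_T$. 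So the bound on $\partial_m g_A$ is used pointwise at $T$, not through the a priori range of $\cY$.

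What the paper's route buys:
\begin{itemize}
\item It is a one-shot comparison. It needs no Gateaux differentiability of $J$ at intermediate points, which is exactly the step you flag as your main obstacle.
\item It gives at once the strict statement (equality iff $\Delta c=0$), which the paper reuses in Step~2 of Theorem~\ref{thm-special} for uniqueness.
\end{itemize}

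What your route buys:
\begin{itemize}
\item It reuses machinery the paper needs anyway: \reff{dual}, \reff{1storder} in Proposition~\ref{prop-concave}, and the bound $\underline\gamma_T\le\cY\le\overline\gamma_T$ proved in Step~3 of Theorem~\ref{thm-special}.
\item It makes clear why $\underline K,\overline K$ are the thresholds: they are exactly where $\partial_cF_A$ leaves the a priori range of the adjoint.
\item It proves the stronger fact that $J$ is monotone along the whole segment, not just at the endpoints.
\item It also yields strictness, since for $\theta>0$ the inequality $\partial_cF_A(s,c^\theta_s)>\overline\gamma_T$ is strict on $\{c<\underline K\}$.
\end{itemize}

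Your worry about integrability of $\phi'$ is unnecessary. Once $\phi$ is differentiable on $[0,1]$ with $\phi'\le0$ at every point, the real mean value theorem already gives $\phi(1)\le\phi(0)$, with no fundamental theorem of calculus needed.
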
 
\proof  We shall only prove  $J(c)\leq J(c\wedge \ol K)$.  The other inequality follows from similar arguments. Fix $c\in \cA_A^2$ and denote  
 $$
 \Delta c: = c-c\wedge \ol K =(c- \ol K)^+\geq 0.
 $$
Denote $\Delta m_t := m_t^c-m_t^{c\wedge \ol K}$, $\D \ol Y :=\ol Y^c -\ol Y^{c\wedge \ol K}$, $\D\ol Z:= \ol Z^c - \ol Z^{c\wedge \ol K}$. Then 
\beaa
\dis \Delta m_t &=&  -\int_0^t \Delta c_s ds;\\
\dis \Delta \ol Y_t &=& \pa_m g_A(T,X_T, \xi, \tilde{m}_T) \Delta m_T  -\int_t^T \D\ol Z_sdB_s\\
\dis &&+\int_t^T\big[\pa_z F_2(s, X_s, \tilde Z_s) \D \ol Z_s +\l_1(s)\l_2(s) e^{-\l_2(s) \tilde c_s} \D c_s\big]ds\\
\dis &=&\int_t^T\big[\l_1(s)\l_2(s) e^{-\l_2(s) \tilde c_s} - \pa_m g_A(T,X_T, \xi, \tilde{m}_T)\big]\D c_sds - \int_t^T \D\ol Z_sd \tilde B_s,
\eeaa
where $\tilde{c}_s \in [c_s\wedge \ol K_s, c_s]$, $\tilde{m}_T\in [m_T^{c\wedge \ol K}, m_T^c]$, $\tilde Z_s$ is between $\ol Z^c$ and $\ol Z^{c\wedge \ol K}$,  and 
\beaa
\tilde B_t := B_t - \int_0^t \pa_z F_2(s, X_s, \tilde Z_s) ds,
\eeaa
is $\tilde \dbP$-Brownian motion, with $\tilde \dbP$ defined in an obvious way through the Girsanov theorem. Note that $\tilde c_s > \ol K_s$ whenever $\D c_s\neq 0$. Then, by Assumption \ref{assum-FA3} (iii) and \reff{K}, we have
\beaa
\dis \Delta \ol Y_0 &=& \dbE^{\tilde \dbP}\Big[\int_0^T\big[\l_1(s)\l_2(s) e^{-\l_2(s) \tilde c_s} - \pa_m g_A(T,X_T, \xi, \tilde{m}_T)\big]\D c_sds \Big] \\
\dis &\le&  \dbE^{\tilde \dbP}\Big[\int_0^T\big[\l_1(s)\l_2(s) e^{-\l_2(s) \ol K_s} - \ul \g_T\big]\D c_sds \Big] = 0,
\eeaa
with equality holding true if and only if $\D c =0$. This implies that $J(c)\le J(c\wedge \ol K)$, with equality holding true if and only if $c \le \ol K$.
\qed

\begin{prop}
\label{prop-concave}
Let Assumptions \ref{assum-0}, \ref{assum-FA}, \ref{assum-FA2}, \ref{assum-FA3}, and \reff{r=0} hold true.  Fix $\Xi\in \cA_P^1$ and let $J(c)$ be as in Proposition \ref{prop-bound}. Then $J$ is strictly concave on $\cA_A^2([\ul K, \ol K])$. 
\end{prop}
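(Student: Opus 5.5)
The plan is to prove strict concavity along segments. Fix $c, \hat c\in \cA_A^2([\ul K,\ol K])$ with $\hat c\neq c$ and set $\D c:=\hat c - c$ and $c^\e := c+\e\D c$ for $\e\in[0,1]$. Since $\ul K\le c^\e\le \ol K$ for all such $\e$, it suffices to show that $\varphi(\e):=J(c^\e)=\ol Y^{c^\e}_0$ is twice differentiable with $\varphi''(\e)<0$.

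\textbf{Step 1: the first variation.} I will differentiate BSDE \reff{BSDEc} in $\e$, with the special form \reff{FA-special}. I read the consumption term as $-\l_1 e^{-\l_2 c}$, as in $f_A$, which is consistent with \reff{Ic2}. As in \reff{tdolY}, $\td m_t=-\int_0^t\D c_sds$, and
\beaa
\td\ol Y_t = \pa_m g_A\,\td m_T + \int_t^T\big[\pa_zF_2(s,X_s,\ol Z_s)\td\ol Z_s + \l_1\l_2e^{-\l_2 c^\e_s}\D c_s\big]ds - \int_t^T\td\ol Z_sdB_s .
\eeaa
Let $\tilde\dbP$ be the Girsanov measure with drift $\pa_zF_2(s,X_s,\ol Z^{c^\e}_s)$, which is bounded by Assumption \ref{assum-FA}(iv), and let $\tilde B$ be the corresponding Brownian motion. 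Define
\beaa
\tilde Y_t:=\td\ol Y_t+\int_0^t \l_1\l_2e^{-\l_2c^\e_s}\D c_sds .
\eeaa
Then $\tilde Y$ is a $\tilde\dbP$-martingale with terminal value $\zeta:=-\pa_mg_A\int_0^T\D c_sds+\int_0^T\l_1\l_2e^{-\l_2c^\e_s}\D c_sds$, and $\td\ol Z$ is its integrand against $\tilde B$. Because $c^\e\in[\ul K,\ol K]$, definition \reff{K} gives $\ul\g_T\le \l_1\l_2e^{-\l_2c^\e}\le\ol\g_T$. Together with Assumption \ref{assum-FA3}(iii) this yields $|\zeta|\le 2\ol\g_T\int_0^T|\D c_s|ds$. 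Hence, by Cauchy--Schwarz,
\beaa
\dbE^{\tilde\dbP}\Big[\int_0^T|\td\ol Z_s|^2ds\Big]\le \dbE^{\tilde\dbP}[|\zeta|^2]\le 4T\ol\g_T^2\,\dbE^{\tilde\dbP}\Big[\int_0^T|\D c_s|^2ds\Big].
\eeaa

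\textbf{Step 2: the second variation.} Differentiating once more gives
\beaa
\td^2\ol Y_t=\pa_{mm}g_A|\td m_T|^2+\int_t^T\big[\pa_zF_2\,\td^2\ol Z_s+\pa_{zz}F_2|\td\ol Z_s|^2-\l_1\l_2^2e^{-\l_2c^\e_s}|\D c_s|^2\big]ds-\int_t^T\td^2\ol Z_sdB_s .
\eeaa
Under the same $\tilde\dbP$, the term $\pa_{mm}g_A|\td m_T|^2$ is nonpositive, $\pa_{zz}F_2\le L$, and $\l_1\l_2^2e^{-\l_2c^\e}\ge \ul\l_2\ul\g_T$. Combining these with the bound from Step 1,
\beaa
\varphi''(\e)=\td^2\ol Y_0\le \big(4LT\ol\g_T^2-\ul\l_2\ul\g_T\big)\,\dbE^{\tilde\dbP}\Big[\int_0^T|\D c_s|^2ds\Big],
\eeaa
which is strictly negative by Assumption \ref{assum-FA3}(iv) whenever $\D c\neq0$. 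Note that $\tilde\dbP\sim\dbP$. Integrating $\varphi''<0$ on $[0,1]$ then gives strict concavity of $J$ on $\cA_A^2([\ul K,\ol K])$.

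\textbf{Main obstacle.} The difficulty is the convexity of $F_2$ in $z$: it produces the positive term $\pa_{zz}F_2|\td\ol Z|^2$, so concavity cannot follow from a comparison argument alone. This is why the quantitative Step 1 estimate, with both the $\ol Z$-variation and the second variation measured under the same measure $\tilde\dbP$, is essential.

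A secondary technical point is justifying the two differentiations in $\e$, that is, the existence of the $L^2$-limits. I would handle this by standard BSDE stability estimates, using boundedness of $\D c$, the bounds on $\pa_m g_A$ and $\pa_{mm}g_A$, and boundedness of $\pa_zF_2$. If second-order differentiability is delicate, an alternative is to compare first variations at $\e$ and $\e'$ directly: a mean-value form of the same computation shows that $\varphi'$ is strictly decreasing.
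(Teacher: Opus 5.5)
Your proposal is correct and follows essentially the same route as the paper's proof. Both arguments differentiate $J(c^\th)$ twice in $\th$ and write the second variation as an expectation under the Girsanov measure with drift $\pa_z F_2(s,X_s,\ol Z^{\th}_s)$. Both use $\pa_{mm}g_A\le 0$, $\pa_{zz}F_2\le L$ and $\l_1\l_2^2e^{-\l_2 c^\th}\ge \ul\l_2\ul\g_T$, control $\dbE\int_0^T|\td\ol Z^\th_s|^2ds$ by $4T\ol\g_T^2\,\dbE\int_0^T|\D c_s|^2ds$ through the first-variation BSDE, and conclude with Assumption \ref{assum-FA3}(iv). Your martingale $\tilde Y$ is a repackaging of the paper's identity $\dbE[|\td\ol Y^\th_0|^2+\int_0^T|\td\ol Z^\th_s|^2ds]=\dbE[(\pa_mg_A\td m_T+\int_0^T\l_1\l_2e^{-\l_2c^\th_s}\D c_s ds)^2]$, and reading the consumption term as $-\l_1e^{-\l_2c}$ is what the paper's computations actually use.
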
 
\proof  Fix $c^0, c^1\in  \cA_A^2([\ul K, \ol K])$, and denote $\D c := c^1-c^0$, $c^\th = c^0  + \th \D c$, $\th \in [0,1]$, and $\tilde J(\th) := J(c^\th)$. By \reff{tdolY}, \reff{cYA}, and \reff{FA-special}, we have 
\bea
\label{1storder}
{d\over d\th} \tilde J(\th) = \td \ol Y^\th_0,
\eea
where, denoting $(m^\th, \ol Y^\th,\ol Z^\th) := (m^{c^\th}, \ol Y^{c^\th}, \ol Z^{c^\th})$, 
\bea
\label{tdYth}
\left.\ba{lll}
\dis  \td m_t = -\int_0^t  \D c_s ds,\\
\dis  \td\ol Y^\th_t = \pa_m g_A(T, X_T, \xi, m^{\th}_T)\td m_T - \int_t^T \td\ol Z^\th_s dB_s\\
 \dis\qq + \int_t^T \big[\l_1(s)\l_2(s)e^{-\l_2(s) c^\th_s} \D c_s + \pa_zF_2(s, X_s, \ol Z^{\th}_s) \td  \ol Z^\th_s\big] ds.
\ea\right.
 \eea
By differentiating in $\th$ further, we have
\beaa
{d^2\over d\th^2} \tilde J(\th) = \td^2 \ol Y^\th_0,
\eeaa
where
\bea
\label{tdYth2}
\left.\ba{lll}
\dis  \td^2\ol Y^\th_t = \pa_{mm} g_A(T, X_T, \xi, m^{\th}_T)|\td m_T|^2 - \int_t^T \td^2\ol Z^\th_s dB_s + \int_t^T \Big[\pa_zF_2(s, X_s, \ol Z^{\th}_s) \td^2  \ol Z^\th_s\\
 \dis\qq\qq -\l_1(s)|\l_2(s)|^2e^{-\l_2(s) c^\th_s} |\D c_s|^2 + (\td  \ol Z^\th_s)^\top \pa_{zz}F_2(s, X_s, \ol Z^{\th}_s) \td  \ol Z^\th_s\Big] ds.
\ea\right.
 \eea
Let $\dbP^\th$ denote the probability measure induced by $B^\th_t := B_t - \int_0^t \pa_zF_2(s, X_s, \ol Z^{\th}_s) ds$. Then
\beaa
&&\dis {d^2\over d\th^2} \tilde J(\th) = \td^2 \ol Y^\th_0 = \dbE^{\dbP^\th}\Big[\pa_{mm} g_A(T, X_T, \xi, m^{\th}_T)|\td m_T|^2\\
&&\dis\qq + \int_0^T \big[ -\l_1(s)|\l_2(s)|^2e^{-\l_2(s) c^\th_s} |\D c_s|^2 + (\td  \ol Z^\th_s)^\top \pa_{zz}F_2(s, X_s, \ol Z^{\th}_s) \td  \ol Z^\th_s\big] ds\Big].
\eeaa
Note that $\ul K \le c^\th_s \le \ol K$. By \reff{K} and Assumption \ref{assum-FA3} (ii) we have
\beaa
\l_1(s)|\l_2(s)|^2 e^{-\l_2(s) c^\th_s} \ge \l_1(s)|\l_2(s)|^2 e^{-\l_2(s) \ol K_s} = \l_2(s) \ul \g_T\ge \ul \l_2  \ul \g_T.
\eeaa
Then, by Assumption \ref{assum-FA3} (i) we have
\beaa
 {d^2\over d\th^2} \tilde J(\th) \le   \dbE^{\dbP^\th}\Big[\int_0^T \big[ L |\td  \ol Z^\th_s|^2-\ul \l_2\ul \g_T   |\D c_s|^2 \big]ds\Big].
\eeaa
On the other hand, by \reff{tdYth} we have
\beaa
&&\dis \dbE^{\dbP^\th}\Big[| \td\ol Y^\th_0|^2 + \int_0^T  |\td\ol Z^\th_t|^2dt\Big]\\
&&\dis = \dbE^{\dbP^\th}\Big[\Big(\pa_m g_A(T, X_T, \xi, m^{\th}_T)\td m_T + \int_0^T \l_1(s)\l_2(s)e^{-\l_2(s) c^\th_s} \D c_sds\Big)^2\Big]\\
&&\dis \le 2 \dbE^{\dbP^\th}\Big[|\pa_m g_A(T, X_T, \xi, m^{\th}_T) \td m_T|^2 + T\int_0^T |\l_1(s)\l_2(s)e^{-\l_2(s) c^\th_s} \D c_s|^2ds\Big]\\
&&\dis \le 2 \dbE^{\dbP^\th}\Big[\ol\g_T^2 T \int_0^T |\D c_s|^2ds  + T\int_0^T |\l_1(s)\l_2(s)e^{-\l_2(s) \ul K_s} |^2  |\D c_s|^2ds\Big]\\
&&\dis =2 \dbE^{\dbP^\th}\Big[\ol\g_T^2 T \int_0^T |\D c_s|^2ds  + T\int_0^T \ol\g_T^2  |\D c_s|^2ds\Big] = 4\ol\g_T^2 T \dbE^{\dbP^\th}\Big[\int_0^T |\D c_s|^2ds\Big] .
\eeaa 
Then, by Assumption \ref{assum-FA3} (iv),
\beaa
\dis {d^2\over d\th^2} \tilde J(\th)  \le \dbE^{\dbP^\th}\Big[\int_0^T \big[4 L\ol\g_T^2 T -\ul \l_2\ul \g_T \big] |\D c_s|^2ds\Big]\le 0,
\eeaa
with equality holding true if and only if $\D c =0$. That is, for any different $c^0, c^1\in  \cA_A^2([\ul K, \ol K])$, $\tilde J$ is strictly concave on $[0, 1]$. Therefore, for any $\th\in (0,1)$, we have
\beaa
J(\th c^1 + (1-\th) c^0) = J(c^\th) = \tilde J(\th) < \th \tilde J(1) + (1-\th) \tilde J(0) = \th J(c^1) + (1-\th) J(c^0).
\eeaa
This means that $J$ is strictly concave on $\cA_A^2([\ul K, \ol K])$. 
\qed

\begin{thm}
\label{thm-special}
Consider the setting  in Proposition \ref{prop-concave}. Then $J$ has a unique optimal control $c^*$ in $\cA_A^2$ and  FBSDE \reff{FBSDE2} has a unique solution. Moreover, $c^*_t = I_c(t, \cY_t)$ as in \reff{Ic2} and $c^*\in \cA_A^2([\ul K, \ol K])$. 
\end{thm}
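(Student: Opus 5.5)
By Proposition \ref{prop-bound}, $V_A(\Xi)=\sup_{c\in \cA_A^2([\ul K,\ol K])} J(c)$, and any $c\notin\cA_A^2([\ul K,\ol K])$ is strictly beaten by its truncation $(c\vee\ul K)\wedge\ol K$. The equality case in that proof gives strictness: $J(c)<J(c\wedge\ol K)$ unless $c\le\ol K$, and similarly for $\ul K$. So it suffices to show that $J$ attains its supremum on the convex set $\cC:=\cA_A^2([\ul K,\ol K])$ at a unique point $c^*$. Uniqueness is then immediate from the strict concavity in Proposition \ref{prop-concave}: if two maximizers $c^0\neq c^1$ existed, $J(\tfrac12(c^0+c^1))$ would strictly exceed the maximum. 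The same truncation argument shows any maximizer in $\cA_A^2$ lies in $\cC$.

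For existence I would work in the Hilbert space $\dbL^2([0,T]\times\O)$, in which $\cC$ is closed, bounded and convex. I would take a maximizing sequence $c^n\in\cC$ and pass to convex combinations $\hat c^n\in\mathrm{conv}(c^n,c^{n+1},\dots)$ that converge strongly in $\dbL^2$ (Mazur/Koml\'os). By concavity, $J(\hat c^n)\ge\min_{k\ge n}J(c^k)\to V_A(\Xi)$, so $\{\hat c^n\}$ is still maximizing. Since the $\hat c^n$ are uniformly bounded, I would also have $\hat c^n\to c^*$ in every $\dbL^p$.

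The remaining step, which I expect to be the main obstacle, is upper semicontinuity (ideally continuity) of $J$ along this strongly convergent bounded sequence. My plan is to use the stability of BSDE \reff{BSDEc}. Since $m^{c}_T=m_0+\int_0^T(\a_s-c_s)ds+\int_0^T\b_s\si\,dB_s$, I get $|m^{\hat c^n}_T-m^{c^*}_T|\le\int_0^T|\hat c^n_s-c^*_s|ds\to0$ in $\dbL^p$. Assumption \ref{assum-FA3}(iii) makes $g_A$ Lipschitz in $m$ with constant $\ol\g_T$, so the terminal data converge in $\dbL^2$. For the generator, the term $\l_1e^{\l_2 c}$ (sign as in the proof of Proposition \ref{prop-bound}) is Lipschitz on the bounded range $[\ul K,\ol K]$, and $F_A$ is uniformly Lipschitz in $z$ with the $c$-dependence separated out. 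Standard a priori estimates then give $\ol Y^{\hat c^n}_0\to\ol Y^{c^*}_0$, hence $J(c^*)=V_A(\Xi)$. Proposition \ref{prop-a*} supplies the optimal action, provided \reff{a*integrability} holds.

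Finally, for the FBSDE: Proposition \ref{prop-c*} shows that the optimal $c^*$ produces a solution of \reff{FBSDE2} with $c^*_t=I_c(t,\cY_t)$ via \reff{Ic2}, which yields existence. For uniqueness, take any solution $(m,\ol Y,\ol Z,\cY,\cZ)$ of \reff{FBSDE2} and set $c:=I_c(t,\cY_t)$. Since $\ul\g_T\le\cY\le\ol\g_T$, as a $\tilde\dbP$-conditional expectation of $\pa_m g_A$, we get $c\in\cC$. Along $c^\th=c+\th(c'-c)$, equation \reff{dual} shows that the derivative of $J$ in every direction vanishes at $c$. Concavity of $\tilde J$ then gives $J(c')\le J(c)$ for all $c'\in\cC$, so $c$ is a maximizer and hence $c=c^*$. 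With $c$ fixed, the forward SDE and the two decoupled BSDEs have unique solutions, so the FBSDE solution is unique.
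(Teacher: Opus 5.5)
Your proposal is correct and follows essentially the same route as the paper: truncation via Proposition \ref{prop-bound} together with strict concavity gives uniqueness, and a maximizing sequence with Mazur convex combinations plus continuity of $J$ gives existence. For FBSDE \reff{FBSDE2}, you combine the first-order condition \reff{dual} with concavity, with only minor variations from the paper: you get $c=I_c(t,\cY_t)\in\cA_A^2([\ul K,\ol K])$ directly from $\ul\g_T\le\cY\le\ol\g_T$ and the monotonicity of $I_c$ instead of testing \reff{dual} with $\D c=(c-\ol K)^+$, you argue ``a critical point of a concave functional is the maximizer'' instead of deriving a contradiction from two critical points, and you spell out the BSDE-stability step $J(\hat c^m)\to J(c^*)$ that the paper calls clear.
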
 
\proof We proceed in three steps.

{\bf Step 1.} We first prove the existence of $c^*$. Recall \reff{ctruncation} and let $c^n\in \cA_A^2([\ul K, \ol K])$, $n\ge 1$, be an approximating optimal sequence, namely $\lim_{n\to\infty} J(c^n) = V_A(\Xi)$. Note that $\ul K$, $\ol K$ are bounded, then $\{c^n\}_{n\ge 1}$ is weakly compact. That is,  by otherwise considering a subsequence, there exists $c^*\in \cA_A^2([\ul K, \ol K])$ such that 
\beaa
\lim_{n\to\infty} \dbE\Big[\int_0^T c^n_t \eta_t dt\Big] = \dbE\Big[\int_0^T c^*_t \eta_t dt\Big],\q\forall \eta\in \dbL^2(\dbF).
\eeaa
Moreover, by Mazur's lemma (cf. \cite{Rudin}), there exist subsequence $\hat c^m$ converging to $c^*$ strongly:
\bea
\label{convstrong}
 \lim_{m\to\infty} \dbE\Big[\int_0^T |\hat c^m_t- c^*_t|^2  dt\Big] = 0,
\eea
where each $\hat c^m$ is a convex combination of $\{c^n\}_{n\ge m}$. That is, for each $m$, there exist $n^m_1, \cds, n^m_{k_m} \ge m$ and $\l^m_1, \cds, \l^m_{k_m}$ such that $\hat c^m = \sum_{j=1}^{k_m} \l^m_j c^{n^m_j}$ and $\l^m_j\ge 0$, $\sum_{j=1}^{k_m} \l^m_j = 1$. By Proposition \ref{prop-concave}, we have
\beaa
J(\hat c^m) \ge \sum_{j=1}^{k_m} \l^m_j J(c^{n^m_j}) \to V_A(\a,\b, \xi),\q\mbox{as}~ m\to\infty.
\eeaa
By \reff{convstrong}, it is clear that $\lim_{m\to\infty} J(\hat c^m) = J(c^*)$. Then $J(c^*)\ge V_A(\Xi)$. Since $V_A(\Xi)$ is the optimal value, then we must have the equality and hence $c^*$ is an optimal control.

{\bf Step 2.} We next prove the uniqueness of $c^*$. Assume $c^*\in \cA_A^2$ is an arbitrary optimal control. First, by \reff{cbound} we see that $c^*\vee \ul K$ and $c^*\wedge \ol K$ are also optimal, and $J(c^*) = J(c^*\vee \ul K) = J(c^*\wedge \ol K)$. By the arguments in the end of the proof for Proposition \ref{prop-bound}, we see that $c^* = c^*\vee \ul K = c^*\wedge \ol K$. That is, $\ul K\le c^* \le \ol K$, and thus $c^* \in \cA_A^2([\ul K, \ol K])$. Now by the strict concavity of $J$ on $\cA_A^2([\ul K, \ol K])$, as obtained in Proposition \ref{prop-concave}, we see that $c^*$ is unique.

{\bf Step 3.} By Proposition \ref{prop-c*}, $c^*$ induces a solution to FBSDE \reff{FBSDE2}, and it holds that  $c^*_t = I_c(t,\cY_t)$. In particular, since $\ul \g_T \le \pa_m g_A\le \ol \g_T$ and $\pa_z F_2=\pa_z F_A$ is bounded, by the third equation in \reff{FBSDE2} we see that $\ul \g_T \le \cY_t\le \ol \g_T$, $\dbE[\int_0^T |\cZ_t|^2dt\big]<\infty$.  Then by the first two equations in \reff{FBSDE2} we have $\dbE\big[\sup_{0\le t\le T}[|m_t|^2 + |\ol Y_t|^2] + \int_0^T |\ol Z_t|^2dt\big]<\infty$.

It remains to prove the uniqueness of FBSDE \reff{FBSDE2}.  Let $m$, $(\ol Y, \ol Z)$, $(\cY, \cZ)$ be an arbitrary solution to FBSDE \reff{FBSDE2}. Denote $c_t := I_c(t, \cY_t)$. Recall \reff{tdY}, \reff{tdolY}, \reff{cYA}, and \reff{dual}, and let $\tilde \dbP$ denote the probability measure induced by  $\tilde B_t := B_t-\int_0^t \pa_z F_2\big(s,X_s,\ol Z^{c}_s\big)ds$ through Girsanov theorem.
By the derivation of FBSDE \reff{FBSDE} or \reff{FBSDE2} we have $\td \ol Y^{c}_0 = 0$ for any $\D c$. In particular, setting $\D c_t := (c_t - \ol K_t)^+\ge 0$, by \reff{dual}, \reff{cYA}, and \reff{FA-special} we have
\beaa
0 &=& \td \ol Y^{c}_0 =  \dbE^{\tilde \dbP}\Big[\int_0^T \big[ \l_1(s)\l_2(s) e^{-\l_2(s) c_s}  - \cY^{c}_s\big] \D c_sds\Big]\\
&=&\dbE^{\tilde \dbP}\Big[\int_0^T \big[ \l_1(s)\l_2(s) e^{-\l_2(s) c_s}  - \pa_m g_A(T, X_T, \xi, m^c_T)\big] \D c_sds\Big]\\
&\ge& \dbE^{\tilde \dbP}\Big[\int_0^T \big[ \l_1(s)\l_2(s) e^{-\l_2(s) c_s}  - \ul \g_T\big] \D c_sds\Big]\\
&=& \dbE^{\tilde \dbP}\Big[\int_0^T \l_1(s)\l_2(s) \big[e^{-\l_2(s) c_s}  - e^{-\l_2(s) \ol K_s}\big] \D c_sds\Big]
\eeaa
Note that $e^{-\l_2(s) c_s}  - e^{-\l_2(s) \ol K_s}>0$ whenever $\D c_s>0$. Then we must have $\D c_s = 0$, $ds\times d\tilde \dbP$-a.s., which is equivalent to $ds\times d\dbP$-a.s. This implies that $c\le \ol K$. Similarly, we have $c\ge \ul K$. Then $c\in \cA_A^2([\ul K, \ol K])$.

We now assume FBSDE \reff{FBSDE2} has two different solutions $m^i$, $(\ol Y^i, \ol Z^i)$, $(\cY^i, \cZ^i)$, $i=0,1$, and set $c^i_t := I_c(t, \cY^i_t)$. Then $c^0, c^1 \in  \cA_A^2([\ul K, \ol K])$, $\td \ol Y^{c^i}_0=0$ for any $\D c$, and $c^0, c^1$ are not equal. Denote $\D c := c^1-c^0$, $c^\th := c^0 + \th \D c$, $\tilde J(\th) := J(c^\th)$, as in Proposition \ref{prop-concave}. By \reff{1storder} we have ${d\over d\th} \tilde J(0) = \td  \ol Y^{c^0}_0=0$, ${d\over d\th} \tilde J(1) = \td  \ol Y^{c^1}_0=0$. This contradicts with Proposition \ref{prop-concave} that $\tilde J$ is strictly concave. Therefore, FBSDE \reff{FBSDE2} has at most one solution.
\qed

\begin{rem}
\label{rem-FBSDE}
In Theorem \ref{thm-special}, we use the control problem to prove the well-posedness of the corresponding FBSDE \reff{FBSDE2}, rather than the opposite direction as in the standard literature. To the best of our knowledge, albeit under strong technical conditions, the well-posedness of FBSDE \reff{FBSDE2}  is new in the literature.  
\end{rem}

\section{The Principal's problem}
\label{sect-principal}
\setcounter{equation}{0}
We now turn to the principal's problem. The problem is overall very challenging, and in fact we do not expect its well-posedness in such a general form.  Namely the principal's problem may not have an optimal contract, as the agent's problem given a contract may not have an optimal control, and even if they exist, one may not expect in general their uniqueness. Nevertheless, in this section we provide some partial results here, and then we will present a solvable example rigorously and completely in the next section. 

As in the literature, we rewrite the BSDEs in \reff{FBSDE} in the forward form, and consider general initial time $t_0$. That is, given $t_0<T$, together with the equations for $X$ and $m^P$ and recalling \reff{r=0}, we consider the following forward equations on $[t_0, T]$:
\bea
\label{forward}
\left.\ba{lll}
\dis X_t= x_0+\int_{t_0}^t \sigma(s,X_s)dB_s,\\
\dis m^A_t=m_A +\int_{t_0}^t [\alpha_s-I_c(s, X_s, \alpha_s,\beta_s, \ol Z_s, \cY_s)]ds +\int_{t_0}^t \beta_s \sigma(s,X_s) dB_s,\\
\dis m^P_t= m_P+\int_{t_0}^t [-\alpha_s-c^P_s]ds +\int_0^t(1-\beta_s)  \sigma(s,X_s)  dB_s,\\
\dis \ol Y_t = \ol y -\int_{t_0}^t  F^{I_c}_A\big(s,X_s, \alpha_s,\beta_s, \ol Z_s, \cY_s\big)  ds+\int_{t_0}^t \ol Z_sdB_s, \\
\dis \cY_t= y' -\int_{t_0}^t (\pa_z F_A)^{I_c}\big(s,X_s, \alpha_s,\beta_s, \ol Z_s, \cY_s\big)\cZ_s ds +\int_{t_0}^t \cZ_s dB_s,
\ea\right.
\eea
with the following terminal constraints:
\bea
\label{terminal-constraint}
\ol Y_T = g_A(T, X_T, \xi, m^A_T),\q \cY_T  = \pa_m g_A(T, X_T, \xi, m^A_T),\q\mbox{for some $\cF_T$-measurable}~ \xi.
\eea
Here $(\a, \b)\in \cA_P^{1,1}$.  Let 
\begin{align*}
\cA_Z   &:= 
    \left\{ \ol Z, \cZ \in \dbL^2(\dbF) \left| \begin{array}{l}
     \mbox{$\bullet$ The SDEs \reff{forward} are well-posed;} \\
     \mbox{$\bullet$  The process $(\pa_z F_A)^{I_c}\big(s,X_s, \alpha_s,\beta_s, \ol Z_s, \cY_s\big)$ induce a probability } \\ 
     \mbox{ $\quad$ measure $\dbP^{\ol Z, \cZ}$ through the Girsanov theorem.}
    \end{array} \right. \right\}
\end{align*}
To circumvent the constraint on $\xi$ which is not convenient to handle, we focus on two cases.

\begin{assum}
\label{assum-lumpsum}
One of the following two cases holds true:

\no{\bf Case 1.} There is no lump sum payment, namely $\xi$ is not involved. Then \reff{terminal-constraint} becomes
\bea
\label{terminal-constraint1}
\ol Y_T = g_A(T, X_T, m^A_T),\q \cY_T  = \pa_m g_A(T, X_T,  m^A_T).
\eea

\no{\bf Case 2.} $\pa_\xi g_A>0$.  Let $g_A^{-1}(T, x, \ol y, m)$ denote the inverse function of $g_A$ with respect to $\xi$.  Then we may reduce \reff{terminal-constraint} into one constraint:
\bea
\label{terminal-constraint2}
\cY_T  = \pa_m g_A\Big(T, X_T, g_A^{-1}\big(T, X_T, \ol Y_T, m^A_T\big), m^A_T\Big).
\eea
\end{assum}

Denote
\bea
\label{hatgP}
\hat g_P(T, X_T, m^P_T, m^A_T, \ol Y_T) := \left\{ \ba{lll} g_P(T, X_T, m^P_T),\q\mbox{in Case 1};\\
g_P\big(T, X_T, g_A^{-1}(T, X_T, \ol Y_T, m^A_T), m^P_T\big),\q \mbox{in Case 2}.
\ea\right.
\eea
We now introduce the dynamic value function:
\bea
\label{Vtx}
V(t_0, x, m_A, m_P, \ol y, y') := \sup_{(\a, \b,  c^P)\in \cA^{1,1}_P\times \cA_P^3, (\ol Z, \cZ)\in \cA_Z} Y^P_{t_0},\q\mbox{subject to \reff{terminal-constraint1} or  \reff{terminal-constraint2}},
\eea
where, recalling \reff{Ia*},
\bea
\label{BSDEP}
\left.\ba{lll}
\dis Y^P_t  = \hat g_P(T, X_T, m^P_T, m^A_T, \ol Y_T)  - \int_t^T Z^{P}_s dB_s\\
\dis\q +  \int_t^T \Big[f_P(s,X_s, \alpha_s,\beta_s,  c^P_s) +  (\pa_z F_A)^{I_c}(s, X_s, \a_s, \b_s, \ol Z_s, \cY_s) Z^{P}_s \Big] ds,~ t\in [t_0, T].
 \ea\right.
 \eea
 Equivalently,
 \bea
 \label{BSDEP2}
\left.\ba{lll}
\dis Y^P_{t_0}  = \dbE^{\dbP^{\ol Z, \cZ}}_{t_0}\Big[\hat g_P(T, X_T, m^P_T, m^A_T, \ol Y_T)  +  \int_{t_0}^T f_P(s,X_s, \alpha_s,\beta_s,  c^P_s)  ds\Big].
 \ea\right.
 \eea
 We note that, although $Y^P_{t_0}$ is $\cF_{t_0}$-measurable, as in the standard control theory, its essential supremum is deterministic. By abusing the notation we write supremum  in \reff{Vtx} and $V$ is a deterministic function. 
 We emphasize that at this point we do not require FBSDE \reff{FBSDE} to be sufficient for the agent's optimal control, and thus the value function $V(0,\cd)$ may not solve the agent's problem in general. We shall establish this connection in the end of this section, and at this point we focus on the function $V$, which has independent interest. 
 
 We also note that, given $(t_0, x, m_A, m_P, \ol y, y')$, there could be no admissible controls $(\a, \b, \ol Z, \cZ)$ satisfying the terminal constraint ($c^P$ is not involved in the terminal constraint). In this case, we set $V(t_0, x, m_A, m_P, \ol y, y') := -\infty$. To ensure the value function is well defined, we impose the following assumption. For any $t_0<T$, let $D_{t_0}$ denote the set of  $\bx=(x, m_A, m_P,\ol y, y')$ such that there exist $(\a,\b)\in \cA_P^{1,1}$ and $(\ol Z, \cZ) \in \cA_Z$ satisfying the terminal constraint \reff{terminal-constraint1} or  \reff{terminal-constraint2}.  Denote further that
 \bea
 \label{cD}
 \cD := \big\{(t_0, \bx): t_0<T, \bx\in D_{t_0}\big\},
 \eea
 and let $\cD^\circ$, $\ol \cD$, and $\pa \cD$ denote its interior, closure, and boundary, respectively. Recall the initial value $m^0_A, m^0_P$ in \reff{wealth}.
 \begin{assum}
\label{assum-Dt0}
 (i) There exists $\bx\in D_0$ such that $m_A + m_P = m^0_A+m^0_P$;  
 
\no (ii)  $V<\infty$ on $\cD$ and $V$ is continuous  in $\cD^\circ$. 
 \end{assum}

We next study the properties of $V$. When there is no terminal constraint, the optimization problem \reff{Vtx} naturally leads to the following HJB equation:
\bea
\label{HJB}
\pa_t V(t,x,m_A, m_P, \ol y, y') + \sup_{\a, \b, c_P, \ol z, z'\in \dbR} \dbL V\big(t,x,m_A, m_P, \ol y, y'; \a, \b, c_P, \ol z, z'\big) =0,
\eea
where
\bea
\label{dbLV}
\left.\ba{lll}
\dis \dbL V := {1\over 2} \Big[\pa_{xx} V \si^2 + \pa_{m_Am_A} V \b^2 \si^2  + \pa_{m_Pm_P} V (1-\b)^2\si^2 + \pa_{\ol y\ol y} V \ol z^2 + \pa_{y'y'} z'^2\Big]  \\
\dis \q + \pa_{x m_A}V \b \si^2 + \pa_{xm_P}V (1-\b) \si^2 + \pa_{x\ol y}V \si\ol  z + \pa_{xy'} V \si z' + \pa_{m_Am_P} \b (1-\b) \si^2 \\
\dis\q + \pa_{m_A \ol y} V \b \si \ol z + \pa_{m_A y'} V \b \si z' + \pa_{m_P \ol y} V (1-\b) \si \ol z + \pa_{m_P y'} (1-\b)\si z' + \pa_{\ol y y'} V \ol zz' \\
\dis\q + \pa_{m_A} V \big[\a - I_c(t, x, \alpha,\beta, \ol z, y')\big] - \pa_{m_P} V [\a + c] - \pa_{\ol y} V F^{I_c}_A\big(t,x, \alpha,\beta, \ol z, y'\big)\\
\dis\q - \pa_{y'} V (\pa_{z} F_A)^{I_c}\big(t,x, \alpha,\beta, \ol z, y'\big)  z' + f_P(t,x, \alpha,\beta,  c_P) \\
\dis \q+ \big[\pa_x V \si + \pa_{m_A} V \b \si + \pa_{m_P} V (1-\b)\si + \pa_{\ol y} V \ol z + \pa_{y'} V z'\big]  (\pa_{z} F_A)^{I_c}\big(t,x, \alpha,\beta,\ol  z, y'\big).
\ea\right.
\eea

\begin{prop}
\label{prop-DPP}
Let Assumptions \ref{assum-0}, \ref{assum-FA}, \ref{assum-FA2}, \ref{assum-lumpsum}, \ref{assum-Dt0}, and \reff{r=0} hold true. Then, for any $t_0< t_1 < T$ and any $\bx\in D_{t_0}$, we have
\bea
\label{DPP}
V(t_0, \bx) \le \sup_{(\a, \b, c^P)\in \cA_P^{1,1}\times \cA_P^3, (\ol Z, \cZ)\in \cA_Z} Y^{P, t_1, V(t_1,\cd)}_{t_0},
\eea
where, for the processes in \reff{forward},
\bea
\label{BSDEP3}
\dis Y^{P, t_1, V(t_1,\cd)}_{t_0}  := \dbE^{\dbP^{\ol Z, \cZ}}_{t_0}\Big[V(t_1, X_{t_1}, m^A_{t_1}, m^P_{t_1}, \ol Y_{t_1}, \cY_{t_1}) +  \int_t^{t_1} f_P(s,X_s, \alpha_s,\beta_s,  c^P_s)  ds\Big].
 \eea
 Consequently, $V$ is a viscosity subsolution of HJB equation \reff{HJB} in $\cD^\circ$.
\end{prop}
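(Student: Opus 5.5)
The plan is to prove the partial dynamic programming inequality \reff{DPP} directly from the definition \reff{Vtx} via the tower property, and then derive the viscosity subsolution property by the standard contradiction argument. Only the ``easy'' half of the DPP is needed: given an admissible control on $[t_0,T]$, its restriction to $[t_1,T]$, started from the random state at $t_1$, is admissible for the problem at time $t_1$. No measurable selection or pasting of $\e$-optimal controls is required.

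\textbf{Step 1 (the inequality).} Fix $\bx\in D_{t_0}$ and an admissible $(\a,\b,c^P,\ol Z,\cZ)$ satisfying the terminal constraint \reff{terminal-constraint1} or \reff{terminal-constraint2}. By \reff{BSDEP2} and the tower property under $\dbP^{\ol Z,\cZ}$,
\beaa
Y^P_{t_0}=\dbE^{\dbP^{\ol Z,\cZ}}_{t_0}\Big[Y^P_{t_1}+\int_{t_0}^{t_1} f_P(s,X_s,\a_s,\b_s,c^P_s)ds\Big].
\eeaa
Because $\dbF=\dbF^B$ and the Girsanov density is multiplicative, we can condition on $\cF_{t_1}$ and freeze the path of $B$ on $[0,t_1]$, i.e.\ work with regular conditional probabilities. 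For a.e.\ $\omega$, the shifted controls on $[t_1,T]$ are admissible for the problem started at $(t_1, X_{t_1}, m^A_{t_1}, m^P_{t_1},\ol Y_{t_1},\cY_{t_1})$. They still satisfy the terminal constraint, so this random state lies in $D_{t_1}$ a.s. Hence $Y^P_{t_1}\le V(t_1,X_{t_1},m^A_{t_1},m^P_{t_1},\ol Y_{t_1},\cY_{t_1})$ a.s. Substituting this and taking the supremum over controls gives \reff{DPP}.

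The technical points in this step are:
\begin{itemize}
\item measurability of $V(t_1,\cd)$ composed with the state, which we get from the continuity of $V$ on $\cD^\circ$ in Assumption \ref{assum-Dt0};
\item the fact that the essential supremum in \reff{Vtx} is deterministic, so the conditional problem equals $V$;
\item integrability of $V(t_1,\cd)$ under $\dbP^{\ol Z,\cZ}$. Since $V<\infty$ by Assumption \ref{assum-Dt0}(ii), an upper bound suffices: the expectation is well-defined in $[-\infty,\infty)$, and the inequality remains meaningful.
\end{itemize}

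\textbf{Step 2 (viscosity subsolution).} Suppose $\varphi$ is smooth, $V-\varphi$ attains a strict local maximum $0$ at $(t_0,\bx)\in\cD^\circ$, and, for contradiction,
\beaa
\pa_t\varphi+\sup_{\a,\b,c_P,\ol z,z'}\dbL\varphi\le-\delta<0
\eeaa
at $(t_0,\bx)$. By continuity this inequality persists with $-\delta/2$ on a small ball $O\subset\cD^\circ$ around $(t_0,\bx)$; the supremum over unbounded controls has to be handled here, as discussed below. On $\partial O$ we have $V\le\varphi-\eta$ for some $\eta>0$.

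For any admissible control, let $\tau$ be the first exit time of $(t,X,m^A,m^P,\ol Y,\cY)$ from $O$, and take $t_1$ small. We use a stopping-time version of \reff{DPP}, with $\tau\wedge t_1$ in place of $t_1$, obtained by the same argument. Applying It\^o's formula to $\varphi$ under $\dbP^{\ol Z,\cZ}$, the drift of $\varphi$ plus $f_P$ equals $\pa_t\varphi+\dbL\varphi$ evaluated at the current controls. This gives
\beaa
\dbE\Big[V(\tau\wedge t_1,\cd)+\int_{t_0}^{\tau\wedge t_1} f_P\,ds\Big]\le\varphi(t_0,\bx)-\frac{\delta}{2}\dbE[\tau\wedge t_1-t_0]-\eta\,\dbP(\tau<t_1).
\eeaa
The right-hand side is bounded by $\varphi(t_0,\bx)-\kappa$ with $\kappa>0$ uniform over controls. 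Taking the supremum yields $V(t_0,\bx)\le V(t_0,\bx)-\kappa$, a contradiction.

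\textbf{Main obstacle.} The hardest part is the uniformity in Step 2. The controls $\b,\ol z,z'$ range over all of $\dbR$, so $\sup\dbL\varphi$ may be $+\infty$; in that case the subsolution inequality holds trivially. When it is finite, however, the bound $\dbE[\tau\wedge t_1-t_0]\ge c$ fails uniformly over unbounded controls: large diffusion coefficients make the exit time tiny. I would first try to show that the $-\eta\dbP(\tau<t_1)$ term compensates, because $\dbE[\tau\wedge t_1-t_0]+\dbP(\tau<t_1)$ is bounded below uniformly. If this fails, I would fall back on defining subsolutions through the upper semicontinuous envelope of the Hamiltonian restricted to bounded controls and pass to the limit, in the spirit of the face-lifting literature \cite{BCS,ST}.
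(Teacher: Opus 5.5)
Your Step~1 is the paper's proof. You condition at $t_1$ under $\dbP^{\ol Z,\cZ}$ and note that a constrained admissible control keeps $(X_{t_1},m^A_{t_1},m^P_{t_1},\ol Y_{t_1},\cY_{t_1})\in D_{t_1}$. You then bound the inner conditional expectation by $V(t_1,\cd)$ and take the supremum; the right side of \reff{DPP} runs over unconstrained controls, which only enlarges it. The paper dismisses the subsolution property as following from ``rather standard arguments''. So the comparison really concerns your Step~2, and there you have placed the obstacle in the wrong step.

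The exit-time uniformity is harmless. With $p:=\dbP^{\ol Z,\cZ}(\tau<t_1)$ one has $\dbE^{\dbP^{\ol Z,\cZ}}[\tau\wedge t_1-t_0]\ge(t_1-t_0)(1-p)$, so
\[
\frac{\delta}{2}\,\dbE^{\dbP^{\ol Z,\cZ}}[\tau\wedge t_1-t_0]+\eta\,p\;\ge\;\min\Big\{\frac{\delta}{2}(t_1-t_0),\,\eta\Big\}
\]
for every control, however large the diffusion coefficients. The step that actually fails is the earlier one: ``by continuity this inequality persists with $-\delta/2$ on a small ball $O$''. The map $(t,\bx)\mapsto\sup_{\a,\b,c_P,\ol z,z'}\dbL\varphi$ is a supremum of continuous functions over an unbounded set, hence only lower semicontinuous, while you need upper semicontinuity. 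With unbounded controls it genuinely jumps to $+\infty$. When a control enters only linearly, or its quadratic coefficient degenerates, finiteness of the supremum becomes a closed, thin condition on the derivatives of $\varphi$. In \reff{HJB3}, for instance, the term $-\a[\pa_{\tilde y}U+1]$ makes the Hamiltonian infinite unless $\pa_{\tilde y}U=-1$. So at exactly the touching points where the Hamiltonian is finite, and the subsolution inequality has content, every neighbourhood generally contains points where some control makes $\pa_t\varphi+\dbL\varphi$ arbitrarily large. No uniform drift bound on $O$ exists, and the contradiction argument cannot start.

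What the standard argument delivers is $\pa_t\varphi+H^*\ge0$, where $H^*$ is the upper semicontinuous envelope of the Hamiltonian. Your bounded-control fallback gives the same thing after passing to the limit by stability. But $H^*=+\infty$ at exactly those points, so the conclusion is void there. The usual remedy is a variational-inequality formulation with a constraint describing the domain of the Hamiltonian. To obtain the subsolution property for \reff{HJB} as literally written, you need an extra hypothesis, e.g.\ that the Hamiltonian is finite and continuous near the touching point. The paper's appeal to standard arguments hides the same issue, so this is a gap you share with the paper rather than one it closes.

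A minor point as well. Assumption~\ref{assum-Dt0} gives continuity of $V$ only on $\cD^\circ$, while the point $(t_1,X_{t_1},m^A_{t_1},m^P_{t_1},\ol Y_{t_1},\cY_{t_1})$ is only known to lie in $\cD$, possibly on $\pa\cD$. So continuity alone does not give measurability of $V(t_1,\cd)$ along the state.
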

\proof Given $\bx\in D_{t_0}$, for any $(\a,\b)\in \cA_P^{1,1}$ and $(\ol Z, \cZ) \in \cA_Z$ satisfying the terminal constraint, it is clear that $(X_{t_1}, m^A_{t_1}, m^P_{t_1}, \ol Y_{t_1}, \cY_{t_1}) \in D_{t_1}$, a.s. Then
\beaa
Y^P_{t_0} &=& \dbE^{\dbP^{\ol Z, \cZ}}_{t_0}\left[\hat g_P(T, X_T, m^P_T, m^A_T, \ol Y_T)  +  \int_{t_0}^T f_P(s,X_s, \alpha_s,\beta_s,  c^P_s)  ds\right]\\
&=&  \dbE^{\dbP^{\ol Z, \cZ}}_{t_0}\left[ \begin{array}{l} \dbE^{\dbP^{\ol Z, \cZ}}_{t_1}\left[\hat g_P(T, X_T, m^P_T, m^A_T, \ol Y_T)  +  \int_{t_1}^T f_P(s,X_s, \alpha_s,\beta_s,  c^P_s)  ds\right]\\
\quad  +\int_{t_0}^{t_1} f_P(s,X_s, \alpha_s,\beta_s,  c^P_s)  ds \end{array} \right]\\
&\le& \dbE^{\dbP^{\ol Z, \cZ}}_{t_0}\left[V(t_1, X_{t_1}, m^A_{t_1}, m^P_{t_1}, \ol Y_{t_1}, \cY_{t_1}) +  \int_t^{t_1} f_P(s,X_s, \alpha_s,\beta_s,  c^P_s)  ds\right] = Y^{P, t_1, V(t_1,\cd)}_{t_0}.
\eeaa
Take supremum over $(\a,\b,c^P, \ol Z, \cZ)$ on both sides, we obtain \reff{DPP}.

We note that the partial DPP \reff{DPP} does not involve any constraint. Then the viscosity subsolution property of $V$ follows from rather standard arguments.
\qed

\begin{rem}
\label{rem-DPP}
(i) We expect the full DPP holds true, namely equality holds in \reff{DPP}, and then $V$ is a viscosity solution of HJB \reff{HJB}. However, due to the terminal constraint, the opposite direction is much harder to prove. We refer to \cite{ST, ZZ} for some positive results along this direction, and to \cite{LXZ} for a numerical algorithm for HJB equations arising from principal-agent problems. 

\no (ii) The full DPP, as well as the continuity of $V$ assumed in Assumption \ref{assum-Dt0},  are much more likely to hold if we allow the admissible controls to satisfy the constraints only approximately. This amounts to the subtle difference between the raw set value and the set value in \cite{FRZ}. We shall investigate the problem under such relaxed constraints in future study.
\end{rem}

We finally discuss the connection between the value function $V$ and the principal's optimal value $V_P$ in \reff{VP}, subject to the constraints (IC) and (PC). We first note that the (PC) constraint \reff{PC condition} amounts to $\ol y \ge R$, and recall again the initial value $m^0_A, m^0_P$ in \reff{wealth}. 
We thus introduce:
\bea
\label{DP}
D_P:= \big\{(S_0, \ol y, y')\in \dbR^3:  \ol y \ge R, (x_0, m^0_A+S_0, m^0_P-S_0, \ol y, y')\in D_0\big\}.
\eea 
Next, note that 
\beaa
V(T,x,m_A, m_P, y, y') =  \hat g_P(T, x, m_P, m_A, y).
\eeaa
However, since the controls $\ol z, z'$ in HJB \reff{HJB} are unbounded, in general there is a face-lifting issue. That is, $V$ may not be left continuous at $t=T$, see e.g. \cite{BCS, ST, ZZ}. Indeed, assume the following limit exists:
\bea
\label{facelifting}
V(T-,x,m_A, m_P, \ol y, y') := \lim_{t\uparrow T} V(t,x,m_A, m_P,\ol y, y')
\eea
In general  $V(T-,\cd) \neq V(T,\cd)$. More generally, recalling \reff{cD},  we define the boundary value of $V$ and assume its existence:
\bea
\label{facelifting2}
V_{-}(t,\bx) := \lim_{(t',\bx')\in \cD^\circ \to (t, \bx)} V(t',\bx'),\q (t,\bx)\in \pa \cD.
\eea

\begin{thm}
\label{thm-principal}
Let all the conditions in Proposition \ref{prop-DPP} hold true. Assume further that

\no (i) The boundary value function $V_{-}$ in \reff{facelifting2} exists on $\pa\cD$, and $V_{-}\ge V$ on $\cD \cap \pa\cD$;

\no (ii) The equation \reff{HJB} on $\ol \cD$ has a viscosity solution $U$ with boundary condition $V_{-}$ and comparison principle of viscosity semisolutions holds; or $U$ is a classical solution and the partial comparison principle between a classical solution and a viscosity subsolution holds;

\no (iii) For any $(t_0, \bx)\in \cD$ and $\e>0$, there exist $(\a, \b, \xi, c^P)\in \cA^{1,1}_P\times \cA^{1,2}_P\times \cA^2_P$ and $(\ol Z, \cZ)\in \cA_Z$ satisfying the constraint \reff{terminal-constraint} such that $U(t_0, \bx)  \le \dbE[Y^P_{t_0}]+\e$, for the $Y^P_{t_0}$ in \reff{BSDEP} or  \reff{BSDEP2}.

\no (iv) $D_P \neq \emptyset$, and for any $(S_0, \ol y, y')\in D_P$ and $\e>0$, there exist  $(\a,\b, \xi, c^P, \ol Z, \cZ)$ satisfying (iii) above for $(t_0, \bx) = (0, x_0, m^0_A+S_0, m^0_P-S_0, \ol y, y')$ and $\ol y = V_A(\a,\b, \xi)$.

 Then $U = V$ in $\cD$, and, recalling \reff{VP},
\bea
 \label{Principal=supV}
 V_P = \sup_{(S_0, \ol y, y')\in D_P} U(0, x_0, m^0_A+S_0, m^0_P-S_0, \ol y, y').
 \eea
\end{thm}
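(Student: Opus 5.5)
The plan is to establish $U=V$ on $\cD$ by proving the two inequalities separately, and then to read off \reff{Principal=supV} by optimizing over the initial data $(S_0,\ol y,y')$.

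\emph{Upper bound $V\le U$.} By Proposition \ref{prop-DPP}, $V$ is a viscosity subsolution of \reff{HJB} in $\cD^\circ$. On the boundary, hypothesis (i) gives $V_-\ge V$ on $\cD\cap\pa\cD$. The boundary value of $V$ obtained by approaching from $\cD^\circ$ is exactly $V_-$, which is also the boundary condition imposed on $U$. Hence $V$ is a subsolution whose boundary data are dominated by those of $U$. This includes the terminal face $t=T$, where the face-lifted value $V(T-,\cdot)$ is encoded in $V_-$. Hypothesis (ii) then yields $V\le U$ on $\ol\cD$, and in particular on $\cD$. This holds either through comparison of viscosity semisolutions, or through the partial comparison between the classical solution $U$ and the viscosity subsolution $V$. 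The step requiring care is that comparison is stated on $\ol\cD$, whereas $V$ is only known to be continuous on $\cD^\circ$. I would therefore work with the upper semicontinuous envelope of $V$ on $\ol\cD$, which coincides with $V_-$ on $\pa\cD$ by \reff{facelifting2}.

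\emph{Lower bound $U\le V$.} Fix $(t_0,\bx)\in\cD$ and $\e>0$. Hypothesis (iii) provides admissible $(\a,\b,\xi,c^P,\ol Z,\cZ)$ satisfying the terminal constraint with $U(t_0,\bx)\le \dbE[Y^P_{t_0}]+\e$. Because $V$ is deterministic and is the essential supremum of $Y^P_{t_0}$ over admissible controls, we have $Y^P_{t_0}\le V(t_0,\bx)$ a.s. Therefore $U(t_0,\bx)\le V(t_0,\bx)+\e$, and letting $\e\to 0$ gives $U\ge V$ is matched by $U\le V$, so $U=V$ on $\cD$.

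\emph{Identification of $V_P$.} I would prove \reff{Principal=supV} by showing both inequalities.

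For ``$\le$'', take any admissible contract $\Xi=(S_0,\a,\b,\xi)$ with consumption $c^P$ satisfying (PC), together with an optimal agent response $(a^*,c^*)$. Proposition \ref{prop-a*} and Proposition \ref{prop-c*} show that $c^*=I_c(\cdot)$ and that the associated $(m^A,\ol Y,\ol Z,\cY,\cZ)$ solve \reff{FBSDE}. Set $\ol y=\ol Y_0=V_A(\Xi)\ge R$ and $y'=\cY_0$. Then the forward system \reff{forward} started at $(0,x_0,m^0_A+S_0,m^0_P-S_0,\ol y,y')$ satisfies \reff{terminal-constraint}, so $(S_0,\ol y,y')\in D_P$. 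Moreover, $\dbP^{\ol Z,\cZ}=\dbP^{a^*}$ by \reff{Ia*}, and $J_P(\Xi,a^*,c^P)=Y^P_0$ by \reff{BSDEP2}, with $\hat g_P$ recovering $g_P$ through $g_A^{-1}$ in Case 2. It follows that $J_P\le V(0,\cdot)=U(0,\cdot)$.

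For ``$\ge$'', fix $(S_0,\ol y,y')\in D_P$ and use (iv) to get controls with $U(0,\cdot)\le Y^P_0+\e$ and $\ol y=V_A(\a,\b,\xi)\ge R$. I still need the process pair obtained from the FBSDE to be an agent-optimal response, so that the principal's utility in \reff{VP} is at least $Y^P_0$. Here I would use that $\ol y=V_A$ together with Proposition \ref{prop-a*}: the $a^*$ given by $I_a$ attains $\ol Y_0$, and the consumption $I_c$ then realizes $V_A(\Xi)$. This gives (IC) and (PC), and since the agent breaks ties in the principal's favor, $V_P\ge Y^P_0\ge U(0,\cdot)-\e$.

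I expect the main obstacle to be this last verification, namely that the FBSDE-generated controls are genuinely optimal for the agent rather than merely satisfying the first-order condition. I would lean directly on the requirement $\ol y=V_A$ in (iv) to close it. The comparison step at the boundary is the secondary subtlety.
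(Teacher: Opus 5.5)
Your proposal is correct and follows the paper's proof essentially step by step. You get $V\le U$ from the subsolution property and comparison, with (i) covering $\cD\cap\pa\cD$, and $U\le V$ from (iii). You then prove the two inequalities for $V_P$ via Propositions \ref{prop-a*} and \ref{prop-c*}, and in the reverse direction you use the requirement $\ol y=V_A$ in (iv) to certify (IC). You add more detail than the paper, such as the upper semicontinuous envelope at the boundary and the identification $\dbP^{\ol Z,\cZ}=\dbP^{a^*}$, but the route is the same.
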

\proof First, by Proposition \ref{prop-DPP} and the comparison principle we have $V \le U$ on $\cD^\circ$. Next, by the condition (iii) we see that $U \le V$ in $\cD$.  These, together with (i) that $U = V_{-}\ge V$ on $\cD\cap \pa\cD$, imply that $U=V$ in $\cD$.

It remains to prove \reff{Principal=supV}. Let $\tilde V_P$ denote the right side of \reff{Principal=supV}. First, let $\Xi\in \cA_P$ and $(a^*, c^{A*})\in \cA_A$ satisfy the constraints (IC) and (PC).  Let $(\ol Z, \cZ)$ be determined by \reff{FBSDE}. By Propositions \ref{prop-a*}, \ref{prop-c*} and the (IC) constraint we see that $(S_0, \a, \b, c^P)\in \cA_P^{1,0}\times \cA_P^{1,1}\times \cA_P^2$ and $(\ol Z, \cZ)\in \cA_Z$ satisfy constraint \reff{terminal-constraint1} or  \reff{terminal-constraint2}. Moreover, by (PC) constraint we have $\ol y = \ol Y_0 = V_A(\Xi)\ge R$. That is, recalling \reff{wealth}, $(S_0, \ol y, \cY_0)\in D_P$. Then, for the $Y^P$ in \reff{BSDEP2}, we have $J_P(\Xi,a^*, c^P)\le Y^P_0 \le \tilde V_P$. Thus $V_P\le \tilde V_P$. 

On the other hand, for any $(S_0, \ol y, y')\in D_P$ and denoting $\bx :=  (x_0, m^0_A+S_0, m^0_P-S_0, \ol y, y')$ , by condition (iv) above, in particular since $\ol y = V(\Xi)$,  we see that 
\beaa
Y^P_0 = J_P(\Xi, a^*, c^P) \le V_P.
\eeaa
  Then by (iii), $U(0, \bx)\le Y^P_0 +\e \le  V_P+\e$. Since $\e>0$ and $(S_0, \ol y, y')\in D_P$ are arbitrary, we obtain $\tilde V_P\le V_P$, and thus equality holds.
\qed

\begin{rem}
\label{rem-principal1}
(i) Given the constraint \reff{terminal-constraint}, it is not clear that if $V$ is decreasing in $\ol y$ in general, thus we cannot set $\ol y=R$ in \reff{Principal=supV}.

\no(ii) When there is face-lifting, especially when $V(T-,\cd) \neq V(T,\cd)$,  in general the optimal contract may not exist.
\end{rem}

We shall remark that the conditions in Theorem \ref{thm-principal} are rather strong, and some of them are hard to verify. Nevertheless, we will solve an example completely in the next section.

\section{A solvable  example}
\label{sect-example}
\setcounter{equation}{0}

 In this section, we present a simple example which has a closed form solution. We remark that it may violate some conditions in our assumptions in previous sections. However, since it can be solved explicitly, all the involved well-posedness are guaranteed and the results are rigorously correct.
 
Consider the following model with continuous payments only:
 \bea
 \label{model}
 \left.\ba{c}
 b(t,x,a)=a,\q \sigma(t,x)=1,\q r_A = r_P=0,\q m^0_A=m^0_P=0,\\
 f_A(t,x, \alpha,\beta, a,c^A)= f(c^A- {1\over 2} |a|^2),\q f_P(t,x, \alpha,\beta, c^P)= c^P - \k(t) (1-\b)^2,\\
 g_A(T, x, m_A)=m_A,\q g_P(T, x, m_P)=m_P.
 \ea\right.
 \eea

\begin{assum}
\label{assum-example}
(i) $f$ is continuously twice differentiable and strictly concave, with $\lim_{x\to -\infty} f'(x) =\infty$, $\lim_{x\to \infty} f'(x) =-\infty$, where $f'$ denotes the derivative of $f$.  Let $h:=(f')^{-1}$ denote the inverse function of $f'$. 

\no(ii) $\k\ge 0$ and is bounded.
\end{assum}

\subsection{The agent's problem}
We first solve the agent's problem. The main trick here is that $g_A$ is linear in $m_A$, which allows us to optimize both $a$ and $c$ through the comparison principle of BSDEs. To be precise, recall \reff{BSDEa} and \reff{wealth}, for $(m^{A, c^A}, Y^{A,a, c^A}, Z^{A,a, c^A})=(m^{A}, Y^{A,a}, Z^{A,a})$ we have 
\beaa
\left.\ba{lll}
&&\dis Y^{A,a, c^A}_t = m^{A,c^A}_T + \int_t^T \big[f(c^A_s-{1\over 2} a_s^2) + a_s Z^{A,a, c^A}_s \big] ds - \int_t^T Z^{A,a, c^A}_s dB_s\\
&&\dis\qq  =m^{A,c^A}_t + \int_t^T \big[\a_s - c^A_s + f(c^A_s-{1\over 2} a_s^2) + a_s Z^{A,a, c^A}_s \big]ds - \int_t^T [Z^{A,a, c^A}_s-\b_s] dB_s.
\ea\right.
\eeaa
Denote 
\beaa
\tilde Y^{A,a, c^A}_t := Y^{A,a, c^A}_t - m^{c^A}_t,\q \tilde Z^{A,a, c^A}_t := Z^{A,a, c^A}_t - \b_t.
\eeaa
Then
\beaa
\tilde Y^{A,a, c^A}_t = \int_t^T \big[\a_s - c^A_s + f(c^A_s-{1\over 2} a_s^2) + a_s (\tilde Z^{A,a, c^A}_s + \b_s) \big]ds - \int_t^T \tilde Z^{A,a, c^A}_s dB_s.
\eeaa
Introduce
\bea
\label{FA2}
\tilde F(z):= \sup_{a,c} \big[-c + f(c-{1\over 2}a^2) + a z\big].
\eea
Note that, by changing the variable $\tilde c := c-{1\over 2} a^2$, we have
\bea
\label{FA3}
\left.\ba{lll}
\dis \tilde F(t, \tilde z) = \sup_{a, \tilde c} \big[-[\tilde c+{1\over 2} a^2] + f(\tilde c) + a (\tilde z+\b_t)\big] \\
\dis\qq\q = {1\over 2} (\tilde z+\b_t)^2 + f(h(1)) - h(1) = {1\over 2} (\tilde z+\b_t)^2 + \tilde F(0),\\
\dis \mbox{with}\q a^*_t=I_a(t, \tilde z) := \tilde z+\b_t,\q \tilde c^* = h(1),\q  c^*_t=I_c(t,\tilde z) := {1\over 2} (\tilde z+\b_t)^2+ h(1).
\ea\right.
\eea
Now given $(\a, \b)\in \cA_P^{1,1}$, consider the BSDE:
\bea
\label{tildeBSDE}
\tilde Y^A_t = \int_t^T \big[\a_s + {1\over 2}|\tilde Z^{A}_s + \b_s|^2 + \tilde F(0) \big]ds - \int_t^T \tilde Z^{A}_s dB_s.
\eea
Note that the above BSDE has quadratic growth in $\tilde Z^A$.  Let $\cB$ denote the set of process $Z$ such that $\int_0^t Z_s dB_s$ is a BMO martingale. Moreover, note that $m^A_0=S_0$ and
\beaa
\sup_{a, c^A} Y^{A,a,c^A}_0 = \sup_{a,c^A}  \tilde Y^{A,a,c^A}_0 + S_0.
\eeaa
Then, by standard BSDE theory we have
\begin{prop}
\label{prop-eg-agent}
Under \reff{model} and Assumption \ref{assum-example}, for any $(\a, \b)\in \cA_P^{1,1}$, the BSDE \reff{tildeBSDE} is wellposed with $ \tilde Z^A\in \cB$, and
\bea
\label{eg-agent}
V_A(\Xi) = \tilde Y^A_0 + S_0,~ \mbox{with optimal controls}~ a^*_t = \tilde Z^A_t+\b_t,~ c^{A*}_t = {1\over 2} |\tilde Z^A_t+\b_t|^2 + h(1).
\eea
\end{prop}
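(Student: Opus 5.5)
We will prove Proposition \ref{prop-eg-agent} in three steps: well-posedness of the quadratic BSDE \reff{tildeBSDE}, an upper bound on the agent's utility for every admissible $(a,c^A)$ via comparison, and attainment of that bound by the feedback controls in \reff{eg-agent}.

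\textbf{Step 1: well-posedness.} Substitute $\hat Y_t := \tilde Y^A_t$ and $\hat Z_t := \tilde Z^A_t+\b_t$. Then \reff{tildeBSDE} becomes
\[
\hat Y_t = \int_t^T\big[\a_s + {1\over 2}|\hat Z_s|^2+\tilde F(0)\big]ds - \int_t^T \hat Z_s dB_s + \int_t^T\b_sdB_s .
\]
This is a quadratic BSDE with bounded driver at $\hat Z=0$ and bounded terminal data modulo the bounded-integrand martingale $\int\b\,dB$. It can be solved explicitly by the exponential (Cole--Hopf) transform. Set $\Gamma_t:=\exp(\hat Y_t - \int_0^t\b_sdB_s)$, or more directly,
\[
e^{\tilde Y^A_t}=\dbE_t\Big[\exp\Big(\int_t^T[\a_s+\tilde F(0)-{1\over 2}\b_s^2]ds+\int_t^T\b_s dB_s\Big)\Big]\times(\text{correction}),
\]
which is obtained by applying It\^o's formula to $\exp(\tilde Y^A_t+\int_0^t[\a_s+\tilde F(0)]ds)$ and noting that the quadratic term completes the square $\tfrac12|\tilde Z+\b|^2$. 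Since $\a,\b$ are bounded, $\tilde Y^A$ is bounded. Kobylanski's existence and uniqueness theory for bounded solutions of quadratic BSDEs, or this explicit formula, then gives a unique bounded solution, and the standard estimate for bounded solutions of quadratic BSDEs gives $\tilde Z^A\in\cB$.

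\textbf{Step 2: upper bound.} Fix an admissible $(a,c^A)$. By the computation preceding the statement,
\[
J_A(\Xi,a,c^A)=S_0+\tilde Y^{A,a,c^A}_0 .
\]
Its driver is $\a-c^A+f(c^A-\tfrac12a^2)+a(\tilde z+\b)$, which is pointwise dominated by $\a+\tfrac12(\tilde z+\b)^2+\tilde F(0)$ by \reff{FA3}. The comparison theorem should then give $\tilde Y^{A,a,c^A}_0\le\tilde Y^A_0$. I would prove this by linearization: the difference $\D Y$ satisfies a BSDE with nonnegative source and linear term $a_s\D Z_s$, so under $\dbP^a$ (an equivalent measure with a true Brownian motion, by admissibility) we get $\D Y_0 = \dbE^a[\int_0^T(\text{nonneg})ds]\ge0$. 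The integrability required here is exactly what $\cA_A$ provides.

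\textbf{Step 3: attainment (the main obstacle).} Set $a^*=\tilde Z^A+\b$ and $c^{A*}=\tfrac12|a^*|^2+h(1)$. With these controls the driver inequality is an equality, so $\tilde Y^{A,a^*,c^{A*}}=\tilde Y^A$. The delicate point is admissibility, because $a^*$ and $c^{A*}$ need not be bounded, while $\cA_A^2$ requires bounded $c^A$. The plan is as follows.

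First, since $\tilde Z^A+\b\in\cB$, the stochastic exponential $M^{a^*}$ is a true martingale and satisfies a reverse H\"older inequality (Kazamaki). This gives $\dbE[|M^{a^*}_T|^p]<\infty$ for $p$ close to $1$. To obtain all $p$, I would use the explicit exponential structure from Step 1, or bound BMO norms using boundedness of $\a,\b$.

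Second, $f(h(1))$ is constant, so the $(2+\e)$-integrability of $f_A$ holds, and $m^A_T$ is integrable by the energy inequality for BMO martingales.

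If boundedness of $c^{A*}$ cannot be verified, I would fall back on showing that the value is a supremum approximated by truncations $c^n=c^{A*}\wedge n$ and $a^n=a^*\mathbf 1_{|a^*|\le n}$. Stability of the linear BSDE should give convergence of the utilities, so that $V_A(\Xi)=\tilde Y^A_0+S_0$, and optimality then holds in the natural enlarged admissible class.
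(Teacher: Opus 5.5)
Your route is the paper's route made explicit. The paper proves this proposition only through the reduction $\tilde Y=Y-m^A$, the computation of $\tilde F$ via $\tilde c=c-\frac12a^2$, and an appeal to ``standard BSDE theory'': a bounded solution of the quadratic BSDE with $\tilde Z^A\in\cB$, comparison against the linear drivers, and equality at the maximizers. That is your Steps 1--2 and the main line of Step 3.

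Your admissibility concern is real, and the paper does not address it; it tacitly works in an enlarged class of controls, having warned at the start of the example section that earlier assumptions may be violated. For instance, with $\b=0$ and $\a_s=\1_{\{B_{T/2}>0\}}\1_{\{s>T/2\}}$, one finds $\tilde Z^A_t$ of order $(T/2-t)^{-1/2}$ on $\{|B_t|\le\sqrt{T/2-t}\}$, so $c^{A*}$ is unbounded.

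There is a small slip in Step 1. The Cole--Hopf identity is $e^{\tilde Y^A_t}=\dbE_t\big[\exp\big(\int_t^T(\a_s+\tilde F(0))ds+\int_t^T\b_s dB_s\big)\big]$, with no correction factor. This holds because $\Gamma_t:=\exp\big(\tilde Y^A_t+\int_0^t(\a_s+\tilde F(0))ds+\int_0^t\b_s dB_s\big)$ satisfies $d\Gamma_t=\Gamma_t(\tilde Z^A_t+\b_t)dB_t$. In particular $M^{a^*}_T=\Gamma_T/\Gamma_0=\exp\big(-\tilde Y^A_0+\int_0^T(\a_s+\tilde F(0))ds+\int_0^T\b_sdB_s\big)$ has moments of all orders, as you anticipated.

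The one step that would fail as written is the truncation in your fallback. You do need it if $V_A(\Xi)$ is read as the supremum over the strict class $\cA_A$. The linearized comparison of your Step 2 gives the exact loss $S_0+\tilde Y^A_0-J_A(\Xi,a,c)=\dbE^{a}\big[\int_0^T\big(\frac12|a^*_s-a_s|^2+\psi(c_s-\frac12|a_s|^2)\big)ds\big]$, where $\psi(x):=x-f(x)-h(1)+f(h(1))\ge0$ vanishes only at $h(1)$.

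With your choice $c^n=c^{A*}\wedge n$, $a^n=a^*\1_{\{|a^*|\le n\}}$, the argument $c^n-\frac12|a^n|^2$ moves away from $h(1)$:
it equals $n$ on $\{|a^*|>n\}$, and it drops as low as $n-\frac12n^2$ on $\{\sqrt{2(n-h(1))}<|a^*|\le n\}$. So the $\psi$-term has no reason to vanish when $f$ decays fast (e.g. $f=-\cosh$). BMO controls $\int_0^T|a^*_s|^2ds$, but not the pointwise tails of $a^*$.

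Truncate consistently instead: $a^n:=a^*\1_{\{|a^*|\le n\}}$ and $c^n:=\frac12|a^n|^2+h(1)$.
These controls are bounded, hence in $\cA_A$, and the $\psi$-term is identically zero. The loss is then $\frac12\dbE^{a^n}\big[\int_0^T|a^*_s|^2\1_{\{|a^*_s|>n\}}ds\big]$. It tends to $0$ by a uniform reverse H\"older bound on $M^{a^n}_T$ (Kazamaki, since $|a^n|\le|a^*|$ and $\int a^*dB$ is BMO), combined with the energy inequality and dominated convergence. This yields $V_A(\Xi)=\tilde Y^A_0+S_0$ over $\cA_A$, with the value attained by $(a^*,c^{A*})$ in the enlarged class.
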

\no We note that, since in \reff{model} $g_A, g_P$ do not depend on $x$, $V_A(\Xi)$ here does not depend on $\xi$.
 
 \subsection{The dynamic value function}
 We next investigate the dynamic value function $V$ corresponding to \reff{Vtx}. First we rewrite \reff{tildeBSDE} in the forward form. Given $(t_0,x)$ and $(\Xi, c^P)\in \cA_P$, $\tilde y\in \dbR$, $\tilde Z\in \cB$, consider the following forward equations on $[t_0, T]$ (again they do not depend on $\xi$):
 \bea
\label{forward2}
\left.\ba{lll}
\dis X_s= x+ B_t-B_{t_0},\\
\dis m^A_t=S_0 +\int_{t_0}^t \big[\alpha_s- {1\over 2} |Z_s|^2 - h(1)\big]ds +\int_0^t \beta_s  dB_s,\\
\dis m^P_t= -S_0+\int_{t_0}^t [-\alpha_s-c^P_s]ds +\int_0^t(1-\beta_s)   dB_s,\\
\dis \tilde Y_t = \tilde y -\int_{t_0}^t  \big[\a_s + {1\over 2}|\tilde Z_s + \b_s|^2 + \tilde F(0) \big]  ds+\int_{t_0}^t \tilde Z_sdB_s.
\ea\right.
\eea 
Noting that $b^{a^*} = a^* = \tilde Z+\b$, we consider the BSDE:
\bea
\label{YP-eg}
Y^{P}_t  = m^P_T +  \int_t^T \big[c^P_s - k(s) (1-\b_s)^2  +  (\tilde Z_s+\b_s) Z^{P}_s\big] ds - \int_t^T Z^{P}_s dB_s.
\eea
We note that this BSDE is not Lipschitz in $Z^P$. However, since $\tilde Z\in \cB$, by using Girsanov theorem one can easily prove its well-posedness.  We then define
\bea
\label{Vtx2}
V(t_0,x, S_0,\tilde y) := \sup_{\a,\b,c^P, \tilde Z} \dbE[Y^P_{t_0}],\q\mbox{subject to}~\tilde Y_T=0.
\eea

We observe that this problem can be simplified, due to its special structure. Indeed, we first note that $X$ and $m^A$ are not involved in \reff{YP-eg}. Next, denote
\beaa
\tilde Y^P_t := Y^P_t - m^P_t,\q \tilde Z^P_t := Z^P_t - (1-\b_t).
\eeaa
Then
\bea
\label{YP-eg2}
\left.\ba{lll}
\dis \tilde Y_t = \tilde y -\int_{t_0}^t  \big[\a_s +{1\over 2}|\tilde Z_s + \b_s|^2 + \tilde F(0) \big]  ds+\int_{t_0}^t \tilde Z_sdB_s,\\
\dis \tilde Y^{P}_t  = \int_t^T \big[-\a_s - k(s) (1-\b_s)^2  +  (\tilde Z_s+\b_s) (\tilde Z^{P}_s + 1-\b_s)\big]  ds - \int_t^T \tilde Z^{P}_s dB_s.
\ea\right.
\eea 
This implies that
\bea
\label{Vtx3}
\left.\ba{c}
\dis V(t_0,x, S_0, \tilde y) = U(t_0, \tilde y) -S_0,\ms\\
\dis \mbox{where}\q U(t_0, \tilde y):= \sup_{\a,\b, \tilde Z} \dbE[\tilde Y^P_{t_0}],\q\mbox{subject to}~\tilde Y_T=0.
 \ea\right.
\eea

 We now study the function $U$.  First, note that for any $t_0<T$ and $\tilde y\in \dbR$, set 
 \bea
\label{eg-control}
\a_t := {\tilde y\over T-t_0} - C_0,\q \b_t :=1, \q\tilde Z_t:=0,\q t\in [t_0, T],\q \mbox{where}\q C_0:= {1\over 2} + \tilde F(0).
\eea
That is, $D_{t_0} = \dbR$ for all $t_0<T$. 
 \begin{thm}
 \label{thm-eg-U}
 Let \reff{model} and Assumption \ref{assum-example} hold true. Then
\bea
\label{eg-classical}
U(t, \tilde y) = C_0 (T-t) - \tilde y,\q t<T, \tilde y\in \dbR.
\eea
\end{thm}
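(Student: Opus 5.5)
We prove \reff{eg-classical} by a direct upper bound obtained from pathwise algebra, which is attained by the control \reff{eg-control}. No HJB argument is needed.

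\textbf{Upper bound.} Fix $t_0<T$, $\tilde y$, and admissible $(\a,\b,\tilde Z)$ with $\tilde Y_T=0$. Write $\tilde a_s:=\tilde Z_s+\b_s$, which is the agent's effort. Since $\tilde Z\in\cB$, the measure $\tilde\dbP$ induced by $\tilde a$ through Girsanov is well defined. Under $\tilde\dbP$ the second equation of \reff{YP-eg2} gives
\[
\dbE^{\tilde\dbP}_{t_0}\big[\tilde Y^P_{t_0}\big]=\dbE^{\tilde\dbP}_{t_0}\Big[\int_{t_0}^T\big[-\a_s-\k(s)(1-\b_s)^2+\tilde a_s(1-\b_s)\big]ds\Big].
\]
The terminal constraint $\tilde Y_T=0$ in the first equation reads
\[
\tilde y=\int_{t_0}^T\big[\a_s+\tfrac12\tilde a_s^2+\tilde F(0)\big]ds-\int_{t_0}^T\tilde Z_s\,dB_s .
\]
Rewrite $dB_s=d\tilde B_s+\tilde a_s ds$. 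Then $-\int\tilde Z dB=-\int\tilde Z d\tilde B-\int(\tilde a_s-\b_s)\tilde a_s ds$. The $\tilde\dbP$-expectation of the stochastic integral vanishes, because $\tilde Z\in\cB$ stays BMO under the Girsanov change. Taking $\tilde\dbP$-expectations therefore gives
\[
\dbE^{\tilde\dbP}\Big[\int_{t_0}^T\a_s\,ds\Big]=\tilde y-\tilde F(0)(T-t_0)+\dbE^{\tilde\dbP}\Big[\int_{t_0}^T\big(\tfrac12\tilde a_s^2-\b_s\tilde a_s\big)ds\Big].
\]
Substituting this into the principal's objective yields
\[
\dbE^{\tilde\dbP}[\tilde Y^P_{t_0}]=-\tilde y+\tilde F(0)(T-t_0)+\dbE^{\tilde\dbP}\Big[\int_{t_0}^T\big(-\tfrac12\tilde a_s^2+\tilde a_s-\k(s)(1-\b_s)^2\big)ds\Big].
\]
The integrand is pointwise at most $\tfrac12$, since $-\tfrac12 a^2+a\le\tfrac12$ and $\k\ge0$. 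Hence the left side is at most $C_0(T-t_0)-\tilde y$.

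\textbf{Lower bound.} The control \reff{eg-control} sets $\b=1$, $\tilde Z=0$, so $\tilde a=1$ and $\tilde\dbP=\dbP^{1}$. It satisfies $\tilde Y_T=0$, because $\a+\tfrac12+\tilde F(0)=\tilde y/(T-t_0)$ integrates to $\tilde y$. Plugging it into the identity above makes the integrand exactly $\tfrac12$, which gives equality.

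\textbf{Main obstacle.} The delicate step is reconciling the expectation in \reff{Vtx3}, where $\dbE[\tilde Y^P_{t_0}]$ is taken under $\dbP$, with the $\tilde\dbP$-expectations used above. I handle it as follows.
\begin{itemize}
\item For $t_0>0$, I argue conditionally: $\tilde Y^P_{t_0}=\dbE^{\tilde\dbP}_{t_0}[\cdots]$. The constraint identity also holds conditionally on $\cF_{t_0}$, with $\tilde y$ deterministic.
\item The bound therefore holds a.s. for $\tilde Y^P_{t_0}$, and hence for its $\dbP$-expectation.
\item Equality holds for the deterministic control.
\end{itemize}
The same conditioning covers the BMO/Girsanov justification that the stochastic integrals have zero conditional expectation under $\tilde\dbP$, via Kazamaki's criterion.
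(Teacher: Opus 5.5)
Your proof is correct and is essentially the paper's argument. The paper adds the two equations, $\hat Y^P:=\tilde Y^P+\tilde Y$, bounds the combined driver by $C_0-\frac12|\tilde Z+\beta-1|^2-\kappa(1-\beta)^2$, and removes the stochastic integral through the Girsanov change induced by the BMO effort $\tilde Z+\beta$; this is exactly your elimination of $\int\alpha\,ds$ via the $\tilde\dbP$-expectation of the constraint followed by $-\frac12 a^2+a\le\frac12$, the lower bound via \reff{eg-control} is identical, and your conditional treatment of $t_0>0$ spells out a step the paper only calls easy.
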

\proof First, given $t_0<T$ and $\tilde y$, consider the controls in \reff{eg-control}, we have
\beaa
\tilde Y_T=0,\q \tilde Y^P_{t_0} = -\int_{t_0}^T \a_t dt = - \tilde y + C_0 (T-t).
\eeaa
Then $U(t_0, \tilde y) \ge  C_0 (T-t_0) - \tilde y$.

On the other hand, denote $\hat Y^P_t := \tilde Y^P_t + \tilde Y_t$ and $\hat Z^P_t := \tilde Z^P_t + \tilde Z_t$.  For any $\a, \b, \tilde Z$ satisfying the constraint $\tilde Y_T =0$, we have
\beaa
&&\dis\hat Y^P_{t_0}  = \int_{t_0}^T \big[{1\over 2}|\tilde Z_s+\b_s|^2 + \tilde F(0)  - k(s) (1-\b_s)^2 \\
&&\dis \qq\qq +  (\tilde Z_s+\b_s) (\hat Z^{P}_s- \tilde Z_s + 1-\b_s)\big]  ds - \int_{t_0}^T \hat Z^{P}_s dB_s\\
&&\dis =\int_{t_0}^T \big[-{1\over 2}|\tilde Z_s+\b_s-1|^2 +C_0  - k(s) (1-\b_s)^2  +  (\tilde Z_s+\b_s) \hat Z^{P}_s\big]  ds - \int_{t_0}^T \hat Z^{P}_s dB_s\\
&&\dis \le   C_0(T-t_0) + \int_{t_0}^T   (\tilde Z_s+\b_s) \hat Z^{P}_s  ds - \int_{t_0}^T \hat Z^{P}_s dB_s.
\eeaa
Since $\tilde Z\in \cB$ and $\b$ is bounded, one can easily see that
\beaa
\tilde Y^P_{t_0} + \tilde y = \hat Y^P_{t_0} \le C_0(T-t_0).
\eeaa
Since $\a,\b, \tilde Z$ are arbitrary, we obtain $U(t_0, \tilde y) + \tilde y  \le C_0(T-t_0)$, proving \reff{eg-classical}.
 \qed

\begin{rem}
\label{rem-eg-facelift}
(i) Note that $\lim_{t_0\uparrow T} U(t_0, \tilde y) = - \tilde y$, while from \reff{YP-eg2} and \reff{Vtx3} we have $U(T, \tilde y) = 0$. So there exists facelifting issue here.

\no(ii) When $\k = 0$, one may easily check that the model \reff{model} is equivalent to the following model which has only lump sum payment $\xi$, without continuous payment $(\a, \b)$:  
 \bea
 \label{model2}
 \left.\ba{c}
 b(t,x,a)=a,\q \sigma(t,x)=1,\q r_A = r_P=0, \q m^0-A=m^0_P=0,\q \a=\b=0,\\
 f_A(t,x, \alpha,\beta, a,c^A)= f(c^A- {1\over 2} |a|^2),\q f_P(t,x, \alpha,\beta, c^P)= c^P,\\
 g_A(T, x, \xi, m_A)= m_A+\xi,\q g_P(T, x, \xi, m_P)=m_P - \xi.
 \ea\right.
 \eea
 So the facelifting issue exists even when there is no continuous payments. Put differently, the HJB equation \reff{HJB2}-\reff{HJB3} below does not have a classical solution if we use the terminal condition $U(T, \tilde y) = 0$.
\end{rem}

We now write down the HJB equation for the function $U$, even though we already found $U$ without using it  in this case. By \reff{YP-eg2} and \reff{Vtx3}, we have
\bea
\label{HJB2}
\dis \pa_t U + H(t,\pa_{\tilde y} U, \pa_{\tilde y\tilde y} U) =0,
\eea
where the Hamiltonian is:
\bea
\label{HJB3}
\left.\ba{lll}
\dis H:= \sup_{\a, \b, \tilde z} \Big[{1\over 2} \pa_{\tilde y\tilde y} U \tilde z^2 - \pa_{\tilde y} U \big[\a+{1\over 2}(\tilde z+\b)^2 + \tilde F(0)\big] \\
\dis\qq\qq\q + \big[-\a - \k(t)(1-\b)^2 + (\tilde z+\b)(\pa_{\tilde y} U \tilde z + 1-\b)\Big]\\
\dis \q ~= \sup_{\a, \b, \tilde z} \Big[-\a [\pa_{\tilde y} U + 1] + {1\over 2} [\pa_{\tilde y\tilde y} U + \pa_{\tilde y} U]\tilde z^2 \\
\dis \qq\qq\q - [{1\over 2} \pa_{\tilde y} U + 1 ]\b^2 +\k(t)(1-\b)^2- \tilde z \b + \tilde z + \b + \tilde F(0)\Big].
\ea\right.
\eea
Since $\a$ is bounded, we must have $\pa_{\tilde y} U =- 1$, which implies $\pa_{\tilde y\tilde y} U = 0$. Then
\bea
\label{eg-H}
\dis H &=& \sup_{\b, \tilde z} \Big[- {1\over 2} \tilde z^2  - [{1\over 2}  + \k(t)]\b^2 - \tilde z \b + \tilde z + \b + \tilde F(0)\Big] \nonumber\\
\dis &=& \sup_{\b, \tilde z} \Big[- {1\over 2} (\tilde z + \b-1)^2  -  \k(t)(1-\b)^2 + C_0\Big] = C_0,
\eea
with optimal controls $\b^*=1$, $\tilde z^*=0$. Then it is clear that \reff{eg-classical} satisfies \reff{HJB2}-\reff{HJB3} with terminal condition $U(T-, \tilde y) = -\tilde y$.

 \subsection{The principal's problem}
By Proposition \ref{prop-eg-agent}, for any $\tilde y, \a,\b, \tilde Z$ satisfying $\tilde Y_T=0$, we have $V_A(\Xi) = \tilde y + S_0$. Then it is clear that
\beaa
V_P &=& \sup_{S_0\in \dbR, \tilde y \ge R-S_0} V(0, x_0, S_0, \tilde y) = \sup_{S_0\in \dbR, \tilde y \ge R-S_0} \big[ U(0, \tilde y) - S_0\big]\\
&=&\sup_{S_0\in \dbR, \tilde y \ge R-S_0} \big[  C_0 T - \tilde y - S_0\big]  = C_0 T - R.
\eeaa
 From the analysis in the previous subsections, we also have the optimal controls.
 \begin{thm}
 \label{thm-eg-principal}
 Let \reff{model} and Assumption \ref{assum-example} hold. Then the principal's optimal value is:
 \bea
 \label{eg-principal}
 V_P =  C_0 T - R.
\eea
The principal's optimal control is\footnote{$\xi^*$ is arbitrary, since it is not involved in this problem. Since $g_P$ is linear in $m_P$, so the effect of $c^P$ on the continuous part and that on the terminal part cancel each other, and therefore $c^P$ can be arbitrary. Similarly $S_0$ can be arbitrary because it's effect is cancelled by $\a^*$.} :
\bea
\label{eg-contract}
S_0 ~\mbox{arbitrary},\q \a^*_t := {R-S_0\over T} - C_0,\q \b^*_t :=1, \q c^P~\mbox{arbitrary}.
\eea
Moreover, given the above contract, the agent's optimal value and optimal controls are:
\bea
\label{eg-agent} 
V_A(\Xi^*) = R,\q a^*_t = 1,~ c^{A*}_t = {1\over 2}  + h(1).
\eea 
\end{thm}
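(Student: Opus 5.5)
The proof has three parts: the upper bound on $V_P$, the verification that the proposed contract attains it, and the agent's response to that contract.

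\emph{Upper bound.} Take any admissible $(\Xi, c^P)$ satisfying (IC) and (PC). By Proposition \ref{prop-eg-agent} the agent's best response is $a^*=\tilde Z^A+\b$ and $c^{A*}=\frac12|\tilde Z^A+\b|^2+h(1)$, where $\tilde Z^A\in\cB$. The agent's value is $V_A(\Xi)=\tilde y+S_0$ with $\tilde y:=\tilde Y^A_0$. Hence (PC) reads $\tilde y\ge R-S_0$. Moreover $(\a,\b,\tilde Z^A)$ is admissible in \reff{Vtx3} with $\tilde Y_T=0$. The transformation $\tilde Y^P=Y^P-m^P$ gives
\[
J_P\le \dbE[\tilde Y^P_0]-S_0\le U(0,\tilde y)-S_0 .
\]
Here $c^P$ drops out, since $g_P$ is linear in $m^P$. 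Theorem \ref{thm-eg-U} then yields
\[
J_P\le C_0T-\tilde y-S_0\le C_0T-R .
\]
This gives $V_P\le C_0T-R$.

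\emph{Attainment.} Fix any $S_0$ and set $\tilde y:=R-S_0$. The controls \reff{eg-control} with $t_0=0$ give $\a^*_t=(R-S_0)/T-C_0$, $\b^*=1$ and $\tilde Z=0$. I must check that $\tilde Z=0$ really is the solution of the agent's BSDE \reff{tildeBSDE} under this contract. The driver becomes
\[
\a^*+\tfrac12+\tilde F(0)=\a^*+C_0=(R-S_0)/T,
\]
which is deterministic. So $\tilde Y^A_t=(R-S_0)(T-t)/T$ with $\tilde Z^A=0$, and uniqueness in $\cB$ from Proposition \ref{prop-eg-agent} identifies this as the solution. Consequently $V_A(\Xi^*)=\tilde Y^A_0+S_0=R$, and (PC) holds. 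The formulas of Proposition \ref{prop-eg-agent} give $a^*=0+1=1$ and $c^{A*}=\frac12+h(1)$.

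\emph{Principal's value under the contract.} Compute $J_P$ directly through \reff{YP-eg2}. With $\b=1$ and $\tilde Z=0$, the integrand equals $-\a^*+1\cdot\tilde Z^P$, and $\tilde Z^P=0$ solves the equation. Thus
\[
\tilde Y^P_0=-\a^*T=C_0T-(R-S_0),
\]
so $J_P=\tilde Y^P_0-S_0=C_0T-R$, which matches the upper bound.

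Arbitrariness of $c^P$ and $\xi$ follows from the cancellation already noted: $c^P$ enters $f_P$ and $m^P_T$ with opposite signs, and $\xi$ does not appear at all.

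The main obstacle is the rigorous link in the first step. I need that every admissible contract produces a triple $(\a,\b,\tilde Z^A)$ admissible for $U$, including the BMO property $\tilde Z^A\in\cB$ used in the proof of Theorem \ref{thm-eg-U}. I also need that $J_P$ equals $\dbE[\tilde Y^P_0]-S_0$ under $\dbP^{a^*}$, which is a Girsanov step justified by the BMO property. I would handle both by invoking Proposition \ref{prop-eg-agent} and the well-posedness remark following \reff{YP-eg}.
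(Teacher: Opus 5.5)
Your proof is correct and follows the paper's argument. The paper likewise reduces $V_P$ to $\sup_{S_0,\,\tilde y\ge R-S_0}[U(0,\tilde y)-S_0]$ via Proposition \ref{prop-eg-agent} and the transformation behind \reff{Vtx3}, evaluates this with Theorem \ref{thm-eg-U}, and reads off the optimal contract from \reff{eg-control} with $\tilde y=R-S_0$; your explicit checks that $\tilde Z=0$ solves \reff{tildeBSDE} under $\Xi^*$ and that $J_P=C_0T-R$ simply spell out what the paper leaves to ``the analysis in the previous subsections.''
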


\begin{rem}
\label{rem-beta}
If we restrict to $\b=0$, as in the standard literature with continuous payments, from \reff{eg-H} we can see that
\beaa
H =  C_0 - \k(t),\q\mbox{with}\q \tilde z^*=1.
\eeaa
Then one can easily see that $U(t_0, \tilde y) = \int_{t_0}^T [C_0-\k(t)]dt - \tilde y$, and $\tilde Z^*_t = 1$ for $t\in [t_0, T]$. However, in this case there is no $\a\in \cA^1_P$ satisfying the constraint $\tilde Y_T=0$. Indeed, given $\b=0$ and $\tilde Z=1$, by \reff{YP-eg2} we have, at $t_0=0$,
\beaa
 \tilde Y_T = \tilde y -\int_0^T  \big[\a_s +{1\over 2} + \tilde F(0) \big]  ds+\int_0^T dB_s = \tilde y - C_0 T- \int_0^T  \a_s ds+ B_T
\eeaa 
That is, to satisfy the constraint, we need to have
\bea
\label{eg-constraint}
\int_0^T  \a_s ds =  \tilde y - C_0 T+ B_T.
\eea
It is well known that there does not exist $\a\in \dbL^1([0, T])$ satisfying the above constraint, not to mention bounded $\a$. So the principal's problem does not  have an optimal contract in this case. This can be viewed as another advantage for allowing $\beta$ in the contract.
\end{rem}


\end{document}